\theoremstyle{theorem}
\newtheorem{prop}{Proposition}
\theoremstyle{definition}
\newtheorem*{definition}{Definition}
\newtheorem{example}{Example}
\begin{document}

\title{Single Transferable Vote and Paradoxes of Negative and Positive  Involvement}

\author{David McCune\\               %%%% Leave ALL of these as is in your initial submission
\scriptsize William Jewell College\\    %%%% to allow for double blind reviewing.
Liberty, MO 64068\\                %%%% They should be filled in when you are submitting
mccuned@william.jewell.edu}                      %%%% your final manuscript.

\begin{abstract}
We analyze a type of voting paradox which we term an \emph{involvement paradox}, in which a candidate who loses an election could be made into a winner if more of the candidate's non-supporters participated in the election, or a winner could be made into a loser if more of the candidate's supporters participated. Such paradoxical outcomes are possible under the voting method of single transferable vote (STV), which is widely used for political elections throughout the world. We provide a worst-case analysis of involvement paradoxes under STV and show several interesting examples of these paradoxes from elections in Scotland.

\end{abstract}

 \keywords{single transferable vote, empirical results, negative involvement, positive involvement}

\maketitle

\section{Introduction}

Imagine you were a candidate for a political office and you have just lost a close election. You ask your campaign manager, ``Is there any way I could have won?'' and the manager responds, ``You could have won if the turnout of voters who dislike you were increased.'' Would you assume you had misheard? How could it be possible that it is better for your electoral chances to increase the turnout of people who do not support you? Interestingly, when using the election procedure of single transferable vote (STV), a voting method widely used throughout the world, it is sometimes possible that a losing candidate could be turned into a winner if more of that candidate's non-supporters cast ballots. Similarly, it is sometimes possible that a winning candidate can be turned into a loser if we increase turnout among that candidate's supporters. Such outcomes are called \emph{paradoxes of negative} or \emph{positive involvement} and represent a bit of mathematical irrationality at the heart of the STV voting method. In this article we explore how such outcomes can occur, providing theoretical worst-case outcomes and examining several real-world examples from STV elections in Scotland.

\section{Preliminaries}

STV is  used for political elections in many jurisdictions internationally. For example, STV is used to elect Senators in the Australian Senate, members of Ireland's lower legislative house, and council members in Scottish local government elections. In the United States the method is used for federal elections in the states of Alaska and Maine and for municipal elections in cities such as Minneapolis, Oakland, and San Francisco. The method is used for single-winner elections such as electing a mayor and also for multiwinner elections such as electing multiple members of a city council. A variety of similar voting rules can be classified as \emph{STV}; we examine what is considered the ``standard'' version of STV, which is the method used in Scotland. We focus on Scotland throughout because Scottish local elections are the greatest source of publicly available raw ballot data for STV elections. Most jurisdictions which use STV do not  provide ballot data, making the kind of real-world analysis in this article impossible. The Scottish elections are multiwinner (usually three or four candidates are elected at a time) and thus the multiwinner STV setting is our focus.

%For the purposes of this article the most important jurisdiction to use STV is Scotland, which uses the method for local government elections. The reason Scotland is important is that Scottish election offices publicly provide raw ballot data for their elections, in contrast to most Unfortunately, the raw ballot data for most STV elections worldwide is not available. The greatest source of ballot data for these elections come from Scottish local government elections

We now turn to defining and illustrating STV, and begin with notation. Let $n$ denote the number of candidates running in an election and $S$ denote the size of the winner set, which in our context is the number of legislative seats available. We usually denote the candidates $A$, $B$, $C$, etc., unless we are considering a real-world election, in which case we use actual candidate names. In an STV election, each voter casts a ballot which expresses a preference ranking of the candidates; when describing ballots, we use the symbol $\succ$ to denote that a voter prefers one candidate to another. For example, if $n=4$ then a voter could express that they rank $C$ first, $B$ second, $D$ third, and $A$ last, casting the ballot $C \succ B \succ D \succ A$. In most real-world STV elections voters are not required to provide a complete ranking of the $n$ candidates, so a voter could cast the ballot $D \succ B$, for example. Once all ballots are cast they are aggregated into a \emph{preference profile}, which shows how many voters cast each type of ballot. An example of a preference profile with 1804 voters and three candidates is given in Table \ref{profile}. The number 173 in the second column denotes that 173 voters cast the ballot $F \succ G\succ M$; the other numbers communicate similar information about the number of voters who cast the corresponding ballot in that column. We reserve the letter $P$ to denote a preference profile. 

It is difficult to provide a complete definition of STV in a concise fashion and thus we provide a high level description, accompanied by examples. A complete description of the Scottish STV rules can be found at \url{https://www.opavote.com/methods/scottish-stv-rules}. The STV method takes as input a pair $(P,S)$, which we call an \emph{election}, and outputs a set of winners $W(P,S)$ of size $S$ (we avoid the issue of ties; a careful general treatment of STV allows for the method to output multiple winner sets).  To calculate the winner set, STV works as follows: the method proceeds in rounds where in a given round either a candidate is eliminated from the election because they received too few first-place votes, or a candidate is elected to fill one of the $S$ seats because they received enough first-place votes. To earn a seat, a candidate's first-place vote total must reach the election's \emph{quota}, which is defined by \[\text{quota } = \left\lfloor \frac{\text{Number of Voters}}{S+1}\right\rfloor +1.\]

In a given round, if no candidate's first-place vote total achieves quota  then the candidate with the fewest first-place votes is eliminated, and this candidate's votes are transferred to the next candidate on their ballots who has not been elected or eliminated. If more than one candidate has the fewest first-place votes, one of these candidates is chosen to be eliminated through a tie-breaking mechanism, presumably at random. If a candidate's first-place vote total is greater than or equal to the quota, that candidate is elected and the votes they receive above quota (the candidate's \emph{surplus votes}) are transferred proportionally to the next non-eliminated and non-elected candidate which appears on the ballots being transferred. That is, the first-place votes that $A$ has earned in order to reach quota are ``locked in'' for $A$, and only $A$'s surplus votes are transferred to other candidates. The method continues in this fashion until $S$ candidates are elected, or until some number $S'<S$ of candidates have been elected by surpassing quota and there are only $S-S'$ candidates remaining who have not been elected or eliminated.  We fully illustrate this vote transfer process in the examples below. 

Our first example comes from a single-winner election in the Isle of Bute Ward of the Argyll and Bute Council area in Scotland. For purposes of local government, Scotland is divided into 32 council areas, each of which is subdivided into wards. Once every five years each ward elects a set number (usually three or four) of councillors using STV. The governing council for the council area consists of the winners from all of the area's wards.  If a councillor leaves the council for any reason, a by-election in that councillor's ward is held to fill the empty seat. The example below is one such single-winner by-election. 

The election offices of Scottish council areas provide preference profiles for their ward elections. STV has been used for Scottish local government elections since 2007, although most profiles from the 2007 elections are unavailable. The author collected profiles from the 2012, 2017, and 2022 elections, as well as several off-cycle by-elections, for \cite{MG}; the ballot data is available at \url{https://github.com/mggg/scot-elex}. The dataset also includes preference profiles for 21 elections from 2007 which were obtained from preflib.org \cite{Preflib}.

%To explain how these transfers work, suppose candidate $A$ is elected with a total of $a$ votes and a surplus of $A_s$ votes (so that $A_s = a - $ quota), and candidate $B$ is the next eligible candidate on $b$ of these ballots. Rather than receive $b$ votes from the election of $A$, candidate $B$ receives $(A_s/a)b$ votes. That is, the votes that $B$ receives from the election of $A$ are discounted based on how much $A$ surpassed the quota, resulting in fractions of votes being transferred. This process of transferring fractional votes ensures that only $A$'s surplus votes, rather than all of $A$'s votes, are transferred to other candidates when $A$ is elected. 

 \begin{example}\label{first-ex}\textbf{Argyll and Bute, 2021 by-election, Isle of Bute Ward.}
 The election contained the five candidates Findlay, Gillies, MacDonald, McCabe, and Wallace. The preference profile for this election is too large to display, and thus the top of Table \ref{profile} shows the profile after eliminating MacDonald and Wallace, the first two candidates in the actual election to be eliminated and have their votes transferred. To make the table more streamlined, we combine ballots of the form $A\succ B \succ C$ and $A\succ B$, as these ballots convey the same ranking information.
 
 \begin{table}[htb]
  \centering

\begin{tabular}{l|c|c|c|c|c|c|c|c|c}
Num. Voters & 173 & 157 & 371 & 66 & 208 & 265 & 87 & 210 & 267\\
\hline
1st choice & $F$ & $F$ & $F$ & $G$ & $G$& $G$& $M$ & $M$ &$M$\\
2nd choice & $G$ & $M$ &     & $F$ & $M$ &  & $F$ & $G$ & \\
3rd choice  & $M$& $G$ &  & $M$ & $F$ & &$G$&$F$&\\
\end{tabular}

\caption{The preference profile $P_1$ from the 2021 by-election in the Isle of Bute ward of the Argyll and Bute council area, after eliminating the bottom two candidates. The remaining candidates are Findlay (F), Gillies (G), and McCabe (M).}
  \label{profile}
  \end{table}

In this election $S=1$, and in the single-winner case quota is set at a majority of votes. There were 2013 total votes cast and thus quota is $\lfloor 2013/2 \rfloor +1 = 1007$. (Note that the votes in the preference profile do not sum to 2013 because we have already eliminated two candidates, and thus have eliminated ballots in which only those two candidates are ranked.)  As can be seen from the preference profile, Findlay initially has $173+157+371=701$ first-place votes, Gillies has 539, and McCabe has 564. No candidate achieves quota and thus Gillies is eliminated. As a result, 66 votes are transferred to Findlay and 208 to McCabe, and McCabe defeats Findlay 772 to 767. Because $772<1007$, the formal STV algorithm eliminates Findlay and according to the preference profile McCabe's final vote total is 1168.

In most real-world elections the preference profile is too large to display. Consequently, election offices display how an election unfolds using a \emph{votes-by-round table}, which shows the number of first-place votes controlled by each candidate in each round. The votes-by-round table for this election is shown in the left of Table \ref{votes_IRV}. When a candidate is elected their vote total is made bold, and when a candidate is elected or eliminated their votes are not shown for subsequent rounds.

\begin{table}[htb]
  \centering
  
  \begin{tabular}{ccc}
  
  \begin{tabular}{c|ccc}
  \hline
  Cand. & \multicolumn{3}{c}{Votes by Round}\\
  \hline
  $F$ & 701 & 767 &  \\
  $G$ & 539 &          &           \\
  $M$ &  564& 772 & \textbf{1168}\\
    \hline
  \end{tabular}
  
  & &
  
  \begin{tabular}{c|ccc}
  \hline
  Cand. & \multicolumn{3}{c}{Votes by Round}\\
  \hline
  $F$ & 701 & 788 &\textbf{1288}  \\
  $G$ & 565 & 775         &           \\
  $M$ &  564&  &\\
    \hline
  \end{tabular}
  
  \end{tabular}

\caption{(Left) The votes-by-round table for the preference profile in Table \ref{profile}. (Right) The resulting votes-by-round table when adding 26 ballots on which Gillies is ranked first and Findlay is ranked last.}
  \label{votes_IRV}
  \end{table}

\end{example}

In the previous example, suppose we add 26 ballots of the form \begin{center}Gillies$\succ$McCabe$\succ$MacDonald$\succ$Wallace$\succ$Findlay \end{center} to the ballots in the preference profile in Table \ref{profile}. That is, suppose we can find 26 voters with the preference ranking above who did not participate in the election, and we imagine what would have happened if they had cast a ballot. What effect should this have on the electoral outcome? Since these ballots rank Gillies first, it would not be unreasonable for Gillies to win the resulting election, as we have increased his electoral support. Arguably McCabe should still win because she won the original election and these voters seem generally to support her. These seem like the only two reasonable outcomes. However, the reader can check that the addition of these ballots would cause Findlay to win. Surprisingly, adding ballots on which Findlay is ranked last can create a better outcome for her.

The right of Table \ref{votes_IRV} shows how Findlay can win despite the addition of these ballots. In the resulting election Gillies has more first-place votes than McCabe and thus McCabe is eliminated first and Findlay faces Gilles head-to-head in the penultimate round. Even with the additional support of these 26 ballots Gilles cannot defeat Findlay, who now wins the seat. This outcome is an example of a \emph{paradox of negative involvement}, in which there exists a set of  ballots with a candidate $C$ ranked last such that the addition of these ballots to the original ballot data turns $C$ from a loser into a winner.

\begin{definition}
An election $(P,S)$ demonstrates a \textbf{paradox of negative involvement} if there exists a candidate $L\not\in W(P,S)$ and a collection of identical ballots $\mathcal{B}$ with $L$ ranked in last place such that if we add the ballots from $\mathcal{B}$ to the ballots in $P$, creating a new profile $P'$, then $L \in W(P',S)$.
\end{definition}

We use the term \emph{paradox of involvement} following \cite{Duddy}. There is no universally agreed upon language for these paradoxes;  work such as \cite{LM} uses the term ``negative participation paradox'' to describe this same kind of paradox, for example. Regardless of terminology, as far as we know these kinds of paradoxes have previously been studied only in the single-winner case. The notion of an involvement paradox is closely related to the concepts of a no-show paradox \cite{BF, FT, Moulin}, monotonicity paradox \cite{MG, FT, EFSS, ON}, and a paradox of positive responsiveness \cite{May}.

When presented with a paradox such as what is demonstrated in Example \ref{first-ex}, a natural response is that this kind of hypothetical thought experiment is not important. We do not know if there actually exist 16 people in the general electorate of the Isle and Bute Ward who did not participate in the election but, if they had, would have cast a ballot with Gillies ranked first and Findlay ranked last. Thus, why should we care about the concept of a negative involvement paradox?  A possible answer is that one way to evaluate whether a voting method is ``good'' is to analyze if that method behaves reasonably in response to changes in the ballot data. If we make systematic changes to the ballots and observe a paradoxical outcome when using a given voting method, perhaps there is too much mathematical irrationality baked into the design of the method for it to be used. Analyzing negative involvement paradoxes is one way to illustrate that STV sometimes behaves in absurd ways when we perturb the ballot data, which perhaps implies (from a mathematical perspective) that STV crosses a kind of ``reasonableness'' red line.

Our next example demonstrates how STV works if $S>1$, and shows how a paradox of negative involvement can arise in this multiwinner context. We provide a hypothetical example so that we can give a complete preference profile.

\begin{example}\label{second-ex}
Consider the preference profile $P_2$ in Table \ref{second_profile}. There are 10,000 voters and thus in the election $(P_2, 2)$, quota is $\lfloor10000/3\rfloor+1=3334$ votes. Candidate $A$ receives 3350 first-place votes, thereby achieving quota in the first round and winning the first seat. $A$ surpasses quota by only 16 votes, which must be transferred. Candidate $B$ earns $(100/3350)\cdot 16=0.48$ of these votes while $C$ earns $(3250/3350)\cdot 16=15.52$, as shown in the left table of Table \ref{second_ex_votes}. Note that we transfer fractions of votes, which is common under STV because surplus votes are transferred proportionally. After $A$'s votes are transferred no candidate achieves quota, and thus $D$ is eliminated. As a result, 1089 of their votes are transferred to $B$ (candidate $A$ has already been elected and is ineligible for vote transfers) and 1091 transferred to $C$, and $B$ achieves quota and wins the second seat. Thus $W(P_2,2)=\{A,B\}$. See the left of Table \ref{second_ex_votes} for the votes-by-round table.

If the election contained more candidates or the number of seats were larger then this process of elimination and vote transfers can take many more rounds, but all STV elections (under the Scottish rules, which seem mostly standard internationally for political elections) unfold in this manner, up to some minor technical details.

What if an additional 100 voters decided to cast their ballots, and all of these voters do not like $C$? In particular, what should happen if we add 100 ballots of the form $A\succ B \succ D \succ C$ to the ballots in $P_2$, creating the profile $P_2'$? Since $A$ and $B$ were the winners of the original election $(P_2,2)$ and these additional voters' preferred winner set is (presumably) $\{A,B\}$, the reasonable answer is that $W(P_2',2)$ should be $\{A,B\}$. At the very least, $C$ should not benefit from the participation of extra voters who rank this candidate last. However, as shown in the right table of Table \ref{second_ex_votes}, the addition of these ballots cause $C$ to win a seat, demonstrating a paradox of negative involvement. Most voters in $P_2$ who rank $A$ first also rank $C$ second, and by adding these ballots we allow $A$ to surpass quota by a much larger margin, thereby creating a much larger flow of transfer votes to $C$. 

The presence of this paradox is perhaps unsurprising given Example \ref{first-ex}, but note that in the multiwinner case this paradox can arise in different ways than the single-winner case. By adding these ballots we do not change the order of elimination or election in previous rounds; whether we add these ballots or not, $A$ is elected in the first round, $D$ is eliminated in the second round, and $B$ and $C$  face off in the third round. However, with the addition of these ballots $C$ wins over $B$ by a comfortable margin, the reverse of the situation without the additional ballots. 

We note that we engineered a paradox of negative involvement by adding 100 ballots, but we could add anywhere from 22 to 3,062 ballots of the form $A\succ B \succ D \succ C$ and $C$ would win a seat in the resulting election. It is surprising that we can increase the size of the electorate by 30\%, using thousands of additional voters who would prefer anyone but $C$ be on the council, and $C$ benefits.
\end{example}

\begin{table}[htb]
  \centering

\begin{tabular}{l|c|c|c|c|c|c}
Num. Voters & 100 & 3250 & 2250 & 2220 & 1089&1091 \\
\hline
1st choice & $A$ & $A$ & $B$  & $C$ &$D$ & $D$\\
2nd choice & $B$ & $C$ & $D$  & $A$& $A$& $C$\\
3rd choice  & $C$& $B$ &  $A$ & $B$ &$B$& $B$\\
4th choice   & $D$&$D$ & $C$ & $D$ &$C$ & $A$\\
\end{tabular}

\caption{A preference profile $P_2$ with four candidates.}
  \label{second_profile}
  \end{table}

\begin{table}[htb]
  \centering
  
  \begin{tabular}{ccc}
  
  \begin{tabular}{c|ccc}
  \multicolumn{4}{c}{Quota $=3334$}\\
  \hline
  Cand. & \multicolumn{3}{c}{Votes by Round}\\
  \hline
  $A$ & \textbf{3350}& &  \\
  $B$ & 2250 &   2250.48       & \textbf{3339.48}          \\
  $C$ &  2220& 2235.52 & 3326.52\\
  $D$ & 2180 & 2180 & \\
    \hline
  \end{tabular}
  
  & &
  
  \begin{tabular}{c|ccc}
  \multicolumn{4}{c}{Quota $=3367$}\\
  \hline
  Cand. & \multicolumn{3}{c}{Votes by Round}\\
  \hline
  $A$ & \textbf{3450}& &  \\
  $B$ & 2250 &   2254.81       &  3343.81     \\
  $C$ &  2220& 2298.18 & \textbf{3389.19}\\
  $D$ & 2180 & 2180 & \\
    \hline
  \end{tabular}
  
 \end{tabular}
\caption{(Left) The votes-by-round table for the preference profile in Table \ref{profile} when using STV with $S=2$. (Right) The resulting votes-by-round table when adding 100 ballots of the form $A\succ B\succ D \succ C$.}
  \label{second_ex_votes}
  \end{table}
 
We close this section by noting that we can also define a paradox of \emph{positive} involvement \cite{Duddy}, which cannot be demonstrated by STV when $S=1$. We provide examples of this paradox in the multiwinner setting in the following sections.
    
 \begin{definition}
An election $(P,S)$ demonstrates a \textbf{paradox of positive involvement} if there exists a winning candidate $C$ and a collection of ballots $\mathcal{B}$ with $C$ ranked first  such that if we add $\mathcal{B}$ to the ballots in $P$, creating a new profile $P'$, then $C \not\in W(P',S)$.
\end{definition}

\section{Worst-Case Scenarios}\label{worst-case-section}

If $S=1$ then it is straightforward to show that we cannot demonstrate a paradox of involvement by adding ballots on which the original winner is ranked first. In particular, it is not possible to add ballots with the winner $W$ ranked first and candidate $L$ ranked last such that the addition of these ballots turns $L$ into the winner. The most extreme version of an involvement paradox in the single-winner case is illustrated by Example \ref{first-ex}, where we add ballots with the winner ranked second and the last-ranked candidate on these ballots becomes the winner.

This raises the question: in the multiwinner setting, how extreme can these paradoxes be? Is it possible to add ballots on which the original winners are all ranked in the top $S$ slots and the addition of these ballots turns all winners into losers? As we show below, the answer is No (Proposition \ref{first-prop}), but almost (Proposition \ref{worst-case-prop}). 

\begin{prop}\label{first-prop}
Let $(P,S)$ be an election and let $W(P,S)=\{C_1, \dots, C_S\}$. Suppose we add a set of identical ballots to the ballots in $P$, creating a new preference profile $P'$, such that $C_1, \dots, C_S$ are ranked in the top $S$ slots on the added ballots. Then $W(P,S)\cap W(P',S)\neq \emptyset$.

\end{prop}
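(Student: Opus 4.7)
The plan is to argue by contradiction, exploiting the fact that identical ballots move through the STV transfer process as a single block. Suppose toward contradiction that $W(P,S) \cap W(P',S) = \emptyset$, so every $C_i$ is eventually eliminated in the $P'$ run. Because the ballots in $\mathcal{B}$ are identical and rank $C_1,\dots,C_S$ in their top $S$ positions, the $|\mathcal{B}|$ extra votes always travel together: whenever the current $C_i$ holding the block is eliminated, the whole block transfers to the next-preferred active $C_i$ on the $\mathcal{B}$ ranking. Since by assumption no such $C_i$ is ever elected, the block keeps transferring from one $C_i$ to another and only leaves the set $\{C_1,\dots,C_S\}$ once the last $C_i$ has been eliminated.

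I would then fix attention on $C_j$, the last $C_i$ to be eliminated in the $P'$ run. At its elimination round every other $C_i$ has already been removed, so the entire block of $|\mathcal{B}|$ votes rests on $C_j$, giving the key inequality $v_{C_j} \geq |\mathcal{B}|$. Because $C_j$ is being eliminated rather than elected, $v_{C_j} < q'$, where $q'$ is the $P'$ quota; and because it is being eliminated rather than declared a winner by virtue of being among the last $S-S'$ active candidates, the number of remaining actives must strictly exceed $S - S'$ (with $S'$ the count of those already elected). Every other active candidate holds at least $v_{C_j}$ first-place votes since $C_j$ is the minimum. I would then combine these observations with the global vote-accounting identity (elected $+$ active $+$ exhausted votes total $N + |\mathcal{B}|$) and the quota inequality $(S+1)q' > N + |\mathcal{B}|$, aiming for a numerical contradiction.

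The step I expect to be the main obstacle is making this collision yield a contradiction for every admissible $|\mathcal{B}|$. Using only the bound $v_{C_j}\ge |\mathcal{B}|$, the vote-counting argument collapses to $|\mathcal{B}|\cdot S \le N$, which is nontrivial only when $|\mathcal{B}|$ is large relative to $N$. To close the gap for small $|\mathcal{B}|$, the lower bound on $v_{C_j}$ needs to be strengthened by tracking what $C_j$ inherits from the other $C_i$'s as they are successively eliminated in $P'$. Unlike in the $P$ run, where those $C_i$'s reach quota and only surplus transfers, in $P'$ each is fully eliminated and sends its entire pile onward; pinning down how much of that flow lands on $C_j$ --- most plausibly by coupling the $P$ and $P'$ runs round-by-round and using $C_j \in W(P,S)$ --- is where the technical heart of the argument should lie.
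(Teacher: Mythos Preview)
Your framework (contradiction, the $\mathcal{B}$-block traveling among the $C_i$, vote counting at a critical round) is the right one, and it matches the paper's overall shape. But the gap you yourself flag is real and is exactly where the paper's proof diverges from yours: you are tracking the \emph{wrong} $C_i$.

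You focus on $C_j$, the \emph{last} $C_i$ eliminated in the $P'$ run, and obtain only $v_{C_j}\ge |\mathcal{B}|$. As you observe, this is too weak unless $|\mathcal{B}|$ is large. Your proposed fix---accumulating what $C_j$ inherits from the other $C_i$'s as they are eliminated in $P'$---is hard to make precise, because by the time the last $C_i$ falls the two runs may have diverged substantially and there is no clean bookkeeping for how much of the original $C_i$-support survives in $C_j$'s pile.

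The paper instead tracks $C_1$, the \emph{first} candidate elected in the $P$ run (say at round $k$). The point is that before round $k$ no one has been elected in $P$, so rounds $1,\dots,k-1$ consist purely of eliminations of non-winners. Adding $\mathcal{B}$ only inflates the first-place total of one $C_r\in W(P,S)$; every non-winner's total is unchanged round by round, and $C_r$ cannot be elected early in $P'$ since by hypothesis $C_r\notin W(P',S)$. Hence the $P$ and $P'$ runs are \emph{identical} through round $k-1$, and $C_1$ enters round $k$ of the $P'$ run with at least $\lfloor V_P/(S+1)\rfloor+1$ votes. Since vote totals only rise until elimination, this lower bound persists until the round $C_1$ is eliminated in $P'$. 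That bound---quota$_P$ rather than $|\mathcal{B}|$---is what makes the counting contradiction go through uniformly: at $C_1$'s elimination round (or the round the $S$ new winners all reach quota), the $\mathcal{B}$-block still sits on some $C_i$, so the $S$ disjoint new winners must each hold more than $\lfloor V_P/(S+1)\rfloor+1$ votes drawn from the original $V_P$ ballots while $C_1$ also holds at least $\lfloor V_P/(S+1)\rfloor+1$ of them, which overcounts $V_P$.

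So the missing idea is: couple the two runs only up to the moment the first winner is seated in $P$, not all the way to the last $C_i$'s demise in $P'$. That early coupling is exact and gives you the strong lower bound for free.
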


\begin{proof}
Note the proposition is trivially true if $n<2S$ because $W(P,S)$ and $W(P',S)$ have size $S$, and if $n<2S$ then the two sets must have at least one candidate in common. Thus we assume $n\ge 2S$.

Let $W(P,S)=\{C_1, \dots, C_S\}$. Let $\mathcal{B}$ be a collection of identical ballots such that the candidates from $W(P,S)$ are ranked in the top $S$ rankings, and let $P'$ be the preference profile obtained by adding the ballots in $\mathcal{B}$ to the ballots from $P$. Due to the presence of partial ballots, it is possible that every candidate in $(P,S)$ earns a seat without achieving quota. In this case, the candidate ranked at the top of the ballots in $\mathcal{B}$ will earn a seat in $(P',S)$. 

Assume at least one candidate in $W(P,S)$ wins a seat in the election $(P,S)$ by achieving quota. Let $C_1$ be the winning candidate who earns the first seat in $(P,S)$. That is, $C_1$ achieves quota before any other winning candidate or has the highest vote total in the first round in which multiple candidates are elected.  Suppose $C_j \not \in W(P',S)$ for all $C_j \in W(P,S)$, $j\neq 1$. We claim $C_1 \in W(P',S)$.

Assume $C_1 \not \in W(P',S)$ and let $W(P',S)=\{C_{S+1}, \dots C_{2S}\}$. Let $V_P$ denote the number of voters in $(P,S)$ and let $k$ denote the round in which $C_1$ achieves quota in $(P,S)$. Note that in $(P',S)$ the same candidates are eliminated prior to round $k$ as were eliminated in $(P,S)$. To see why, let $C_r$ denote the candidate ranked at the top of the ballots in $\mathcal{B}$. In moving from $P$ to $P'$, the only difference in how the elections unfold up to round $k$ is that $C_r$ has more votes, but otherwise the same candidates are eliminated in each round. It is perhaps \emph{a priori} possible that $C_r$ is elected prior to round $k$ in $P'$ because of their additional support from the ballots in $\mathcal{B}$, but this cannot occur because $C_r$ does not achieve quota in $(P',S)$ by assumption. Thus we assume WLOG that $k=1$ and $C_1$ earns at least $\lfloor V_P/(S+1)\rfloor +1$ votes initially in $(P,S)$. 

Since $C_1 \not \in W(P',S)$, either $C_{S+1}, \dots, C_{2S}$ achieve quota in $(P',S)$ before $C_1$ is eliminated or there exists a round in which $C_1$ has the fewest first place votes and none of $C_{S+1}, \dots, C_{2S}$ have been eliminated. In the former case, in the round in which the winners achieve quota at least $\lfloor V_P/(S+1)\rfloor +1$ votes are controlled by $C_1$ and none of the ballots in $\mathcal{B}$ have been transferred to any candidate in $W(P',S)$. (Note that even if $C_1$ is not ranked at the top of the ballots in $\mathcal{B}$, $C_1$ would need to be eliminated before these ballots could be transferred to any candidates in $W(P',S)$.) For each winner in $W(P',S)$ to achieve quota while $C_1$ does not, each of these winners needs to have at least $\lfloor V_P/(S+1)\rfloor +2$ first-place votes (note that, since $C_1 \not \in W(P',S)$, quota must be greater than $\lfloor V_P/(S+1)\rfloor +1$). But it is straightforward to show that \[\lfloor V_P/(S+1)\rfloor +1 + S(\lfloor V_P/(S+1)\rfloor +2)+|B| > V_P + |B|.\] That is, there are not enough votes to go around in $(P',S)$ for the new winners to achieve quota if $C_1$ has not been eliminated. 

On the other hand, suppose there is a round in which $C_1$ is eliminated and none of $C_{S+1}, \dots, C_{2S}$ have been eliminated. For this to occur, there must exist a round in which $C_1$ has the fewest first-place votes. In this round, $C_1$ has at least $\lfloor V_P/(S+1)\rfloor +1$ votes and the ballots in $\mathcal{B}$ have not been transferred to any candidates in $W(P',S)$. Thus the first-place vote totals for each of $C_{S+1}, \dots, C_{2S}$ must surpass $\lfloor V_P/(S+1)\rfloor +1$ without any assistance from the ballots in $\mathcal{B}$. As in the previous case, there are not enough ballots to go around.\end{proof}

Proposition \ref{first-prop} rules out the most extreme potential example of an involvement paradox under STV, that we can add ballots such that the bottom $S$ ranked candidates replace the top $S$ ranked candidates in the winner set. The next proposition shows it is possible to add ballots so that the bottom $S-1$ ranked candidates on these ballots replace the top $S-1$ ranked candidates in the winner set. 

\begin{prop}\label{worst-case-prop}

Let $S>1$. There exists a preference profile $P$ and a set of identical ballots $\mathcal{B}$ such that

\begin{enumerate}
\item the candidates in $W(P,S)$ are ranked in the top $S$ slots on the ballots in $\mathcal{B}$,
\item when the ballots in $\mathcal{B}$ are added to the ballots in $P$, creating the preference profile $P'$, then $|W(P,S)\cap W(P',S)|=1$ and the top $S-1$ ranked candidates on the ballots in $\mathcal{B}$ are not in $W(P',S)$,
\item if $C$ is ranked in the bottom $S-1$ rankings on the ballots in $\mathcal{B}$ then $C\in W(P',S)$.
\end{enumerate}

\end{prop}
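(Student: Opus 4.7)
The plan is to prove the proposition by explicit construction that generalizes the mechanism of Example~\ref{second-ex}. For each $S \geq 2$ I would build a preference profile $P$ on $n = 2S - 1$ candidates $C_1, \ldots, C_S, L_1, \ldots, L_{S-1}$, together with a set of identical ballots $\mathcal{B}$ of the form
\[
C_1 \succ C_2 \succ \cdots \succ C_{S-1} \succ C_S \succ L_1 \succ L_2 \succ \cdots \succ L_{S-1},
\]
so that $C_S$ sits in the pivotal $S$-th slot. Under this layout, condition (1) is the claim that $W(P, S) = \{C_1, \ldots, C_S\}$, condition (2) is the claim that none of $C_1, \ldots, C_{S-1}$ belongs to $W(P', S)$, and condition (3) is the claim that each $L_i$ belongs to $W(P', S)$.

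To engineer $P$, I would stitch together $S - 1$ nearly independent ``modules'' plus a fortress block for $C_S$. Module $i$ consists of a large cluster of voters whose first preference is $C_i$ and whose second preference is $L_i$, calibrated so that in $(P, S)$, $C_i$ barely achieves quota and the small surplus transfers to $L_i$ without lifting $L_i$ above quota. The fortress block ensures $C_S$ collects enough support (either in round $1$ or via transfers from eliminated $L_i$'s) to claim the final seat. When $\mathcal{B}$ is added, each module's surplus transfer is amplified: because the great majority of $C_i$'s supporters rank $L_i$ second, the inflated $C_i$-surplus flows overwhelmingly to $L_i$, while the proportional rise in quota is calibrated to keep $C_1, \ldots, C_{S-1}$ from securing seats and to push every $L_i$ past the new quota.

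After the construction is in hand, the remainder of the proof is a round-by-round verification, under the Scottish STV rules, for each of $(P, S)$ and $(P', S)$, tracking fractional surplus transfers carefully and confirming that the cascade plays out exactly as described. The main obstacle is the very top of $\mathcal{B}$: forcing $C_1$ out of $W(P', S)$ despite the $|\mathcal{B}|$ extra first-place votes we are handing $C_1$ is a genuine non-monotonic phenomenon of STV. To arrange it I would set up $P$ so that $C_1$'s seat in $(P, S)$ depends on a specific later-round transfer rather than outright round-$1$ quota clearance, and choose $|\mathcal{B}|$ so that the raised quota in $(P', S)$ sits just above $C_1$'s new round-$1$ total; the change in elimination order that follows then sends the $L_i$'s past quota and strands $C_1$ before the rescue transfer ever arrives. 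Ensuring that all $S - 1$ modules fire correctly without interfering with one another, and that $C_S$'s fortress holds under both profiles, is where the bulk of the calculation lives.
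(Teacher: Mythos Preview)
Your ``independent modules'' mechanism does not work as described, and the sketch contains an internal inconsistency that hides the real difficulty.

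First, the amplification story only fires for one module. The ballots in $\mathcal{B}$ all rank $C_1$ first, so only $C_1$'s first-place total rises. For $i\ge 2$, candidate $C_i$ has exactly the same first-place count in $P'$ as in $P$, while the quota has gone \emph{up}; hence $C_i$'s surplus shrinks (or vanishes) rather than being amplified. The Example~\ref{second-ex} mechanism you invoke turns one loser into a winner, but it cannot be run in parallel across $S-1$ modules off a single ballot type. You try to patch module~1 separately by saying $C_1$'s seat in $(P,S)$ comes from a later-round rescue transfer and that in $(P',S)$ the raised quota sits just above $C_1$'s new round-1 total --- but then $C_1$ never reaches quota in $P'$, the $\mathcal{B}$ ballots never leave $C_1$, and none of the $L_i$'s see any amplified surplus at all. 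The two halves of your plan (modules with barely-over-quota $C_i$'s, versus $C_1$ winning only by late transfer) are mutually incompatible.

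Second, for $S=2$ your proposed $2S-1=3$ candidates is provably too few: with three candidates, adding first-place votes for $C_1$ raises $C_1$'s margin over quota by roughly $\tfrac{2}{3}|\mathcal{B}|$, so $C_1$ cannot drop out of the winner set. The paper handles $S=2$ separately with four candidates.

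The paper's construction uses a completely different mechanism. One candidate (call it the pivot) exceeds quota in $(P,S)$ by a tiny margin; that tiny surplus flows to $C_2$ and is exactly what keeps $C_2$ from being the first eliminated. The remaining $2S-2$ candidates are arranged in two blocs of size $S-1$ with crossed second preferences, so whichever of $C_2$ or $C_{S+1}$ is eliminated first triggers a cascade electing the \emph{other} bloc. Adding $\mathcal{B}$ raises the quota just enough to erase the pivot's surplus; $C_2$ is then eliminated first and the cascade flips. Crucially, the pivot --- not any module-style $C_i$ --- is the sole survivor and sits at position $S$ on $\mathcal{B}$; the candidate ranked first on $\mathcal{B}$ is one of the cascade-dependent winners (e.g.\ $C_3$), whose extra $|\mathcal{B}|$ votes are far too few to matter once the cascade has reversed.
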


To prove this proposition then for each $S>1$ we must construct a profile with the desired three properties. To motivate such a construction, Table \ref{prop-proof-ex} provides a profile $P$ with five candidates which demonstrates the above properties for $S=3$. The profile contains 4000 voters and thus quota in the election $(P,3)$ is 1001. The votes-by-round table in the bottom left of Table  \ref{prop-proof-ex} shows how the election unfolds. Initially $C_2$ has the fewest first-place votes but $C_1$ surpasses quota in round 1 by two votes, both of which are transferred to $C_2$. As a result $C_2$ no longer has the fewest first-place votes, and $C_4$ is eliminated in round 2. Their surplus votes are transferred to $C_2$ and $C_3$, and $W(P,3)=\{C_1, C_2, C_3\}$.

If we add 8 ballots of the form $C_3\succ C_2\succ C_1\succ C_4\succ C_5$ to create a new profile $P'$ then $C_1$ still achieves quota in the first round but no longer has any surplus to transfer. As a result $C_2$ is eliminated in round 2, their votes are transferred to $C_4$ and $C_5$, and $W(P',3)=\{C_1, C_4, C_5\}$. This is an example of a paradox of negative and positive involvement since the ballots added rank $C_3$ at the top and $C_5$ at the bottom, and the addition of these ballots turns $C_3$ into a loser and $C_5$ into a winner. 

\begin{table}[htb]
  \centering

\begin{tabular}{l|c|c|c|c|c|c|c|c|c}
Num. Voters & 1003&373&374&375&376&374&374&375&376\\
\hline
1st choice & $C_1$&$C_2$&$C_2$&$C_3$&$C_3$&$C_4$&$C_4$&$C_5$&$C_5$\\
2nd choice & $C_2$&$C_4$&$C_5$&$C_4$&$C_5$&$C_2$&$C_3$&$C_2$&$C_3$\\
\vdots  & $\vdots$ &$\vdots$ &$\vdots$ &$\vdots$ &$\vdots$ &$\vdots$ &$\vdots$ &$\vdots$ &$\vdots$ \\

\end{tabular}
\vspace{.1 in}

\begin{tabular}{ccc}
  
  \begin{tabular}{c|cccc}
  \multicolumn{4}{c}{Quota $=1001$}\\
  \hline
  Cand. & \multicolumn{3}{c}{Votes by Round}\\
  \hline
  $C_1$ & \textbf{1003}&&   \\
  $C_2$ & 747 &   749       & \textbf{1123}&       \\
  $C_3$ &  751&751&\textbf{1125}\\
  $C_4$ & 748&748& \\
  $C_5$ & 751&751&751\\
    \hline
  \end{tabular}
  
  & &
  
  \begin{tabular}{c|cccc}
  \multicolumn{4}{c}{Quota $=1003$}\\
  \hline
  Cand. & \multicolumn{3}{c}{Votes by Round}\\
  \hline
  $C_1$ & \textbf{1003}&&   \\
  $C_2$ & 747 &   747       & &       \\
  $C_3$ &  759&759&759\\
  $C_4$ & 748&748& \textbf{1121}\\
  $C_5$ & 751&751&\textbf{1125}\\
    \hline
  \end{tabular}
  
 \end{tabular}

\caption{(Top) A preference profile $P$ which illustrates the proof of Proposition \ref{worst-case-prop}. (Bottom Left) The votes-by-round table for $P$. (Bottom Right) The votes-by-round table adding 8 ballots of the form $C_3\succ C_2\succ C_1\succ C_4\succ C_5$ to the ballots in $P$.}
  \label{prop-proof-ex}
  \end{table}

By Proposition \ref{first-prop}, this example represents the most extreme outcome of this form of paradox for $S=3$.  For larger values of $S$ we can generalize this construction by ensuring that $C_2$ has the fewest first-place votes and $C_{S+1}$ has the second-fewest. For each other candidate $C_j$, if $j\le S$ (respectively $j>S$) then we ensure that ballots on which $C_j$ is ranked first  have second-place preferences which are spread as evenly as possible among candidates $C_{S+1}, \dots, C_{2S-1}$ (respectively, $C_2, \dots, C_{S}$). If $C_1$ just barely surpasses quota and the first-place vote totals for all other candidates are approximately equal, as in the top of Table \ref{prop-proof-ex}, then we can exhibit paradoxes of negative and positive involvement in a manner similar to what occurs in the bottom of Table \ref{prop-proof-ex}.

\begin{proof}

In building a profile for a given $S$ we wish to minimize the number of candidates used. If $S>2$ then we can use $2S-1$ candidates, but it is straightforward to show that if $S=2$ then we cannot build a profile with the desired properties using three candidates. Thus, we separate the case of $S=2$ from $S>2$. 

For $S=2$, consider the preference profile $P$ below with 1000 voters. Note $W(P,2)=\{C_1,C_2\}$, as $C_1$ transfers no surplus votes in round 2 and the elimination of $C_4$ ensures a transfer of votes only to $C_2$. However, if we add 3 ballots of the form $C_2\succ C_1 \succ C_4\succ C_3$, creating a new profile $P'$, then the quota increases to 335. As a result $C_1$ does not achieve quota in the first round; instead $C_4$ is eliminated and 40 votes transferred to $C_1$. Now $C_1$ has a substantial surplus which is transferred to $C_3$, allowing $C_3$ to defeat $C_2$ in the penultimate round. Thus, $W(P',2)=\{C_1,C_3\}$ and we are done with the case $S=2$.

\begin{center}
\begin{tabular}{l|c|c|c|c}
Num. Voters & 334 & 314 & 312 & 40\\
\hline
1st choice & $C_1$ & $C_2$ & $C_3$ & $C_4$\\
2nd choice & $C_3$ & $C_1$ & $C_2$ & $C_1$\\
3rd choice & $C_4$ & $C_3$ & $C_1$ & $C_2$\\
4th choice & $C_2$ & $C_4$ & $C_4$ & $C_3$\\

\end{tabular}
\end{center}

Let $S>2$. We create a preference profile $P$ with $(S+1)10^S$ voters and $2S-1$ candidates, labeled $C_1, C_2, \dots, C_{2S-1}$. Note that quota is  $\lfloor (S+1)10^S/(S+1) \rfloor + 1 = 10^S+1$.  We now describe the types of ballots which make up the ballot set for $P$.

Let $V_1=10^S+3$ ballots have the form $C_1\succ C_2 \succ \dots$, where the third through $(2S-1)$st rankings can be filled in arbitrarily.

Let $V_2=\left\lfloor \frac{(S+1)10^S-10^S-3}{2S-2}\right\rfloor -2$ ballots have $C_2$ ranked at the top. Of these, let $1/(S-1)$ of them have have $C_j$ ranked second for $S+1\le j\le 2S-1$. Since $V_2/(S-1)$ may not be an integer, assign $\left\lfloor V_2/(S-1) \right\rfloor$ or $\left\lceil V_2/(S-1) \right\rceil$ voters to each of these type of ballot so that the sum of the voters who rank $C_2$ first is $V_2$ (the exact way the floor or ceiling is assigned to a type of ballot with $C_2$ ranked first is not important). The rankings on these ballots past the second ranking can be completed in an arbitrary manner.

Let $V_{S+1}=V_2+1$ ballots have $C_{S+1}$ ranked at the top. Of these, let $1/(S-1)$ of them have $C_j$ ranked second for $2\le j \le S$ and subdivide the $V_{S+1}$ ballots into the $S-1$ types in a similar manner as was done for $C_2$. For each type of ballot $C_{S+1}\succ C_j \succ \dots$, assign  $\left\lfloor V_{S+1}/(S-1) \right\rfloor$ or $\left\lceil V_{S+1}/(S-1) \right\rceil$ voters to this type of ballot in any way that the sum of the ballots with $C_{S+1}$ ranked at the top is $V_{S+1}$.

For any other candidate $C_j$, let $V_j \in \{V_2+2,V_2+3,V_2+4\}$ such that $\displaystyle\sum_{i=1}^{2S-1}V_i=(S+1)10^S$. As with other choices above, the exact choice of value for $V_j$ does not matter. Create $V_j$ ballots with $C_j$ ranked first. If $2\le j \le S$ then split the $V_j$ ballots into types as was done for $C_2$; if $S+1\le j \le 2S-1$ then split the $V_j$ ballots into types as was done for $C_{S+1}$.

We claim $W(P,S)=\{C_1, \dots, C_S\}$. To see this, note that by construction $C_1$ is the only candidate to achieve quota in round 1, and they have two surplus votes to transfer. These two votes go to $C_2$. In round 2 no candidate achieves quota and thus $C_{S+1}$ is eliminated. Their votes are transferred to $C_2, \dots, C_S$ in almost equal parts. The election continues in this manner with candidates from $\{C_j\}_{j\ge S+1}$ being eliminated and votes transferring to candidates in $\{C_j\}_{j\le S}$. At some point in the transfer process it is possible that some candidates in $\{C_j\}_{j\le S}$ achieve quota while others do not, which would cause a small transfer of votes to surviving candidates in $\{C_j\}_{j\ge S+1}$. However, by construction in each round of the election the votes for candidates $\{C_j\}_{j\le S}$ are approximately equal, and thus any surplus vote transfers from such an event would be minuscule in comparison to the vote transfers from $\{C_j\}_{j\le S}$ to $\{C_j\}_{j\le S}$ due to candidate elimination. Thus $W(P,S)=\{C_1, \dots, C_S\}$.

Suppose we add $2(S+1)$ ballots of the form \[C_3\succ C_4 \succ \dots \succ C_S\succ C_2 \succ C_1 \succ C_{S+1}\succ\dots \succ C_{2S-1}\] to the ballot data in $P$, creating a new preference profile $P'$. The addition of these ballots increases quota by two so that in the election $(P',S)$ candidate $C_1$ still achieves quota in the first round but now has no surplus votes to transfer. As a result $C_2$ is eliminated first, setting off a chain reaction of eliminations similar to what occurs in $(P,S)$. By construction, in this case the candidates who survive to the end are $\{C_1,C_{S+1}, \dots, C_{2S-1}\}$.\end{proof}

In practice, we would not expect to see such extreme examples of these paradoxes like those that occur in this section because real-world ballot data does not look like the profile in Table \ref{prop-proof-ex}. However, real-world elections can provide fun and interesting examples of these paradoxes, as we illustrate in the next section.

\section{Real-World Examples}\label{real-world-ex}

In this section we illustrate how paradoxes of negative and positive involvement can manifest in actual elections from Scotland. The dataset of Scottish local government elections contains 1100 elections in total and  we found 99 which demonstrate an involvement paradox. Only two demonstrate a paradox of positive involvement, while all 99 demonstrate a negative involvement paradox. We present four of the most interesting examples, focusing on the most extreme outcomes. A complete description of how these paradoxes arise in each of the 99 elections is provided in the appendix. The appendix also contain a description of how we searched the dataset for elections which demonstrate these paradoxes. 

When engineering a paradox we attempt to make it as ``paradoxical'' as possible. In our view, the strangest outcomes occur when the added ballots rank the original winners at the top of the ballots, since the addition of such ballots seemingly should only reinforce the victory of those candidates. Furthermore, the last-place candidate on the ballots must replace some winning candidate in the winner set; we try to rank this original winner as highly as possible on the added ballots. For all but one of the 99 elections, the paradox occurs by having the last-ranked candidate replace one other candidate in the winner set but all other winners retain their seats, so we do not observe outcomes nearly as extreme as those analyzed in the worst-case scenario. We could find only one election (Example \ref{argyll-bute-ex}) in which multiple winners lose their seats as a result of an involvement paradox.

Our first example comes from a ward in the East Dunbartonshire council area. This is a typical example of the kind of paradox of negative involvement that we see in the Scottish data, and is much in the spirit of Example  \ref{second-ex}.

\begin{example}\label{e-duns-ex}\textbf{East Dunbartonshire, 2022, Bishopbriggs North and Campsie Ward.} The votes-by-round table for the actual election is displayed in the top of Table \ref{e-duns-table}. Ferretti and McDiarmid achieve quota in the first round, and Hendry and Williamson achieve quota in the seventh. Note that Williamson defeats Pews in the penultimate round by about 170 votes.

Suppose we add 483 ballots of the form \begin{center}
Hendry$\succ$McDiarmid$\succ$Williamson$\succ$Ferretti$\succ$Gallacher$\succ$Harris$\succ$Rowan$\succ$Pews
\end{center} to the ballot data. Then the election would unfold as shown in the bottom of Table \ref{e-duns-table}, and Pews replaces Williamson in the winner set. By adding hundreds of ballots on which Pews is ranked last we cause Pews to win a seat.

How does this example work? Note from the votes-by-round table for the original election that Pews benefits more from the election of McDiarmid than Williamson does. After McDiarmid's surplus votes are transferred (see the vote totals for the third round) Pews has picked up an extra 19.9 votes while Williamson has picked up only 5.3, and thus McDiarmid's election benefits Pews over Williamson by a ratio of about $4:1$. Thus, it should be beneficial for Pews if we could inflate McDiarmid's victory. Furthermore, far more Hendry voters prefer Pews to Williamson, and thus an inflated win for Hendry would also benefit Pews. To justify this, we counted the ballots on which Hendry appears and another candidate is ranked immediately after Hendry. There are 1958 of these ballots, of which 801 have the property that Pews is ranked immediately after Hendry; the corresponding number drops to 47 for Williamson. In the actual election this disparity does not help Pews because Hendry is not elected until after Pews is eliminated. If we add lots of ballots on which Hendry is ranked first then we cause Hendry to achieve quota much earlier and Pews can benefit from the subsequent vote transfer. Note that in the resulting election (bottom of Table \ref{e-duns-table}), Hendry and McDiarmid both surpass quota by a significant margin, much to the benefit of Pews. Thus, this example operates much in the same fashion as Example \ref{second-ex}.

\begin{table}

\begin{tabular}{c | c | c|c|c|c|c|c}
\multicolumn{8}{c}{\textbf{Actual Election}, Quota $=1615$}\\
\hline
Candidate & \multicolumn{7}{c}{Votes by Round}\\
\hline
Ferretti & \textbf{2130} &&&&&&\\
Gallacher & 164 & 175.4 & 178.2  & 200.0 &&&\\
Harris & 121 & 135.5 & 136.5 &&&&\\
Hendry & 1495 & 1502.5 & 1509.3 & 1522.4 & 1578.8 & 1596.8 & \textbf{2161.9}\\
McDiarmid & \textbf{1674} &&&&&&\\
Pews & 1229 & 1254.1 & 1274.0 & 1289.1 & 1333.2 & 1429.8 & \\
Rowan & 410 & 462.5 & 468.7 & 491.7 & 508.3 &&\\
Williamson&849 & 1223.3&1228.6&1268.2&1292.7&1592.9&\textbf{1823.7}\\
\hline

\end{tabular}

{\small
\vspace{.1 in}
\begin{tabular}{c | c | c|c|c|c|c|c|c}
\multicolumn{9}{c}{\textbf{Modified Election}, Quota $=1712$}\\
\hline
Candidate & \multicolumn{8}{c}{Votes by Round}\\
\hline
Ferretti & \textbf{2130} &&&&&&&\\
Gallacher & 164 & 172.0 & 179.6  & 185.6 & 210.6&&&\\
Harris & 121 & 132.0 & 133.5 &135.4&&&&\\
Hendry & \textbf{1978} &&&&&&&\\
McDiarmid & 1674 &1711.1&\textbf{1826.1}&&&&&\\
Pews & 1229 & 1245.7 & 1326.0 & 1371.1 & 1391.9 & 1464.9 &1572.9&\textbf{2079.5}\\
Rowan & 410 & 446.5& 450.0 & 461.8 & 484.4 &505.0&&\\
Williamson&849 & 1136.9&1139.6&1154.4&1192.3&1215.7&1511.9&\\
\hline

\end{tabular}
}
\caption{(Top) The votes-by-round table for the 2022 election in the Bishopbriggs North and Campsie Ward of the East Dunbartonshire Council area. (Bottom) The votes-by-round table after adding 483 ballots with Hendry ranked first and Pews ranked last.}
\label{e-duns-table}

\end{table}

What happens if we add 482 (instead of 483) of these ballots to the original ballot data? Surprisingly, even though Pews wins over Williamson by about 61 votes when we add 483, when we add 482 Williamson edges out Pews by about 7 votes. To see why, note that in the bottom of Table \ref{e-duns-table} McDiarmid fails to achieve quota in the second round, but just barely. As a result, when McDiarmid does achieve quota in the third round, she has a healthy surplus to transfer to Pews. If we add only 482 ballots then the quota falls to 1711 and McDiarmid achieves quota in the second round by 0.1 votes. Transferring 0.1 votes has no meaningful electoral effect, and thus Williamson can go on to defeat Pews. It is funny that adding a single ballot can cause a swing of 68 votes in the penultimate round (going from $-7$ for Pews to $+61$).

We show the votes-by-round table when adding 483 ballots of the type above to the original ballot data, but we can actually add anywhere between 483 and 4,329 such ballots and Pews wins a seat over Williamson. Surprisingly, we can add \emph{thousands} of ballots on which Pews is ranked last (and the original election only contained 8,072 voters) and Pews wins a seat as a result. Furthermore, if we add these ballots one at a time, for a while Pews' margin of victory over Williamson only becomes more impressive. For example, if we add 2,000 of these ballots then, in the penultimate round of the resulting election, Pews defeats Williamson 1913.2 votes to 1530.1, a resounding victory of almost 400 votes (recall that in the original election Pews lost by $\approx$170 votes). 

\begin{figure}
\begin{center}

\includegraphics[scale=0.8]{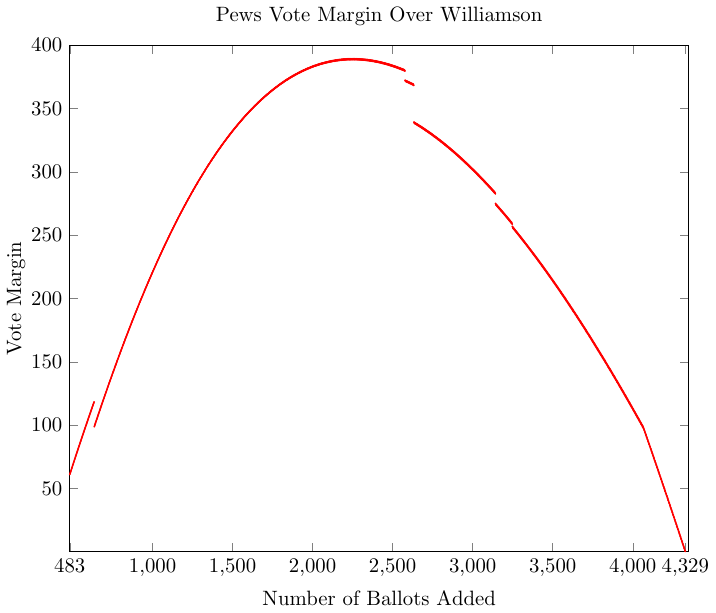} 

\end{center}
\caption{The amount of votes by which Pews defeats Williamson in the penultimate round of the election obtained by adding the given number of ballots with Pews ranked last to the original set of ballots.}
\label{e-duns-votes-margin}
\end{figure}

To see how the addition of these ballots strengthens Pews' win, Figure \ref{e-duns-votes-margin} provides a scatterplot which shows Pews' vote margin over Williamson. For each number of ballots added between 483 and 4329, the plot shows Pews' votes minus Williamson's in the penultimate round when Williamson is eliminated. Note that as we add ballots the vote margin curve can have jumps, which correspond to changes in how the election unfolds. For example, the first jump in the graph occurs when we move from 634 to 636 ballots added. When we add either amount of ballots both Ferretti and Hendry achieve quota in the first round, but when we add 634 Ferretti's vote total is higher than Hendry's and thus Ferretti's votes are transferred first (none of which are transferred to Hendry by rule), whereas for 636 the situation is reversed. Such discrete changes in how the election unfolds cause the ``jump discontinuities'' observed in the scatterplot.

\end{example}

In an STV election we generally do not care about obtaining a final ranking of the candidates. A candidate either wins a seat or does not, and whether a candidate wins the second versus fourth seat (or is eliminated early or later) does not matter. However, the method of STV can provide a ranking, where candidates are ranked by when they achieve quota (if multiple candidates are elected in the same round, they are ranked by vote total) and by when they are eliminated (an earlier elimination gives a lower ranking). For example, in the actual election for the above example STV ranks Ferretti first, McDiarmid second, Williamson third, and so on, until ranking Harris last. The next example provides the most dramatic rise in candidate rankings that we observed as a result of a paradox of negative involvement.

\begin{example} \textbf{East Lothian, 2022, Musselburgh Ward.}  The original election unfolds as shown in the top of Table \ref{e-lothian-table}. If we were to rank the candidates then Whyte comes in sixth as he is the second-to-last candidate to be eliminated. By contrast, McIntosh comes in second, earning the second seat in round 6.

\begin{table}
{\footnotesize
\begin{tabular}{c | c | c|c|c|c|c|c|c|c}
\multicolumn{10}{c}{\textbf{Actual Election}, Quota $=1472$}\\
\hline
Candidate & \multicolumn{9}{c}{Votes by Round}\\
\hline
Bennett&1074&1078.0&1081.0&1091.3&1160.9&1234.8&1276.5&1276.6&\textbf{1715.4}\\
Butts & 312 & 313.9 & 332.9&325.9&&&&\\
Carter&79&79.6&&&&&&&\\
Cassini&\textbf{1596}&&&&&&&&\\
Forrest&1209&1214.1&1222.2&1231.3&1287.4&1425.8&\textbf{1472.2}&&\\
Graham&84&85.7&92.8&&&&&&\\
Mackie&1082&1082.5&1100.5&1106.5&1161.5&1172.3&1177.2&1177.2&\\
McIntosh&991&1001.8&1017.8&1035.2&1100.7&\textbf{1666.9}&&&\\
Whyte&929&1022.2&1026.5&1056.4&1075.8&&&&\\
\end{tabular}

\vspace{.1 in}
\begin{tabular}{c | c | c|c|c|c|c|c|c|c}
\multicolumn{10}{c}{\textbf{Modified Election}, Quota $=1512$}\\
\hline
Candidate & \multicolumn{9}{c}{Votes by Round}\\
\hline
Bennett&1074&1082.2&1085.2&1095.7&1166.0&1323.9&1370.0&1372.7&\textbf{1813.6}\\
Butts & 312 & 315.8 & 325.0&328.0&&&&\\
Carter&79&80.3&&&&&&&\\
Cassini&\textbf{1796}&&&&&&&&\\
Forrest&1209&1219.4&1227.6&1236.8&1292.9&1516.0&\textbf{1516.0}&&\\
Graham&84&87.5&94.6&&&&&&\\
Mackie&1082&1083.0&1101.0&1107.0&1162.0&1192.1&1195.1&1195.4&\\
McIntosh&991&1044.6&1060.6&1078.4&1144.3&&&&\\
Whyte&929&1118.7&1123.2&1154.1&1173.9&\textbf{1659.0}&&&\\
\end{tabular}
\caption{(Top) The votes-by-round table for the 2022 election in the Musselburgh Ward of the East Lothian Council area. (Bottom) The votes-by-round table after adding 200 ballots with McIntosh ranked second and Whyte ranked last.}
\label{e-lothian-table}
}
\end{table}

Suppose we add 200 ballots of the form \begin{center}
Cassini$\succ$McIntosh$\succ$Forrest$\succ$Bennett$\succ$Butts$\succ$Carter$\succ$Graham$\succ$Mackie$\succ$Whyte
\end{center} to the ballot data. As with other examples, these ballots rank the original winners at the top of the ballots and so should only reinforce the original winner set. Furthermore, Whyte is ranked last and thus these ballots should not improve his electoral chances. However, as shown in the bottom of Table \ref{e-lothian-table}, the addition of these ballots causes Whyte to replace McIntosh in the winner set and in the modified election Whyte comes in second. By adding hundreds of ballots on which Whyte is ranked last and McIntosh is ranked second, we cause Whyte to jump from sixth place to second (and cause the corresponding fall for McIntosh). It is surprising that adding ballots which should be bad for Whyte cause him to achieve such a convincing electoral victory. That is, if we were first presented with the votes-by-round table in the bottom of Table \ref{e-lothian-table} then we may assume that Whyte definitely deserves a seat, yet Whyte only earns that seat after hundreds of his non-supporters cast ballots. We can add anywhere from 70 to 259 ballots of this form to engineer the paradox.

This example represents the most extreme type of a paradox of negative involvement as displayed in Example \ref{second-ex}. Adding more ballots of the type above allows Cassini to surpass quota by a larger amount, causing a much larger vote transfer to Whyte. We typically do not observe such a large jump in the rankings from this kind of paradox. %If we were to consider just the votes-by-round table for the modified election with the added ballots we would conclude that Whyte is a very strong candidate who deserves a seat; it is surprising that this result was caused by adding hundreds of ballots on which Whyte is ranked last.

\end{example}

The next example analyzes the only election we found in which multiple winners lose their seats as a result of a paradox of negative involvement.

\begin{example}\label{argyll-bute-ex}\textbf{Argyll and Bute, 2017, Isle of Bute Ward.} The votes-by-round table for the actual election is displayed in the top of Table \ref{argyll-bute-table}. Findlay and Moffat win seats in round 4 and Scoullar wins the third and final seat in round 7 after narrowly defeating Wallace by 33.2 votes.

\begin{table}

\begin{tabular}{c | c | c|c|c|c|c|c}
\multicolumn{8}{c}{\textbf{Actual Election}, Quota $=620$}\\
\hline
Candidate & \multicolumn{7}{c}{Votes by Round}\\
\hline
Findlay&433&437&461&\textbf{709}&&&\\
Gillies&325&343&&&&&\\
MacIntyre&395&402&433&&&&\\
McCallum&85&&&&&&\\
Moffat&472&494&588&\textbf{657}&&&\\
Scoullar&340&354&440&502&524.2&537.2&\textbf{740.5}\\
Wallace&427&437&482&492&496.8&504.0&\\
\end{tabular}

\vspace{.1 in}
\begin{tabular}{c | c | c|c|c|c|c|c}
\multicolumn{8}{c}{\textbf{Modified Election}, Quota $=623$}\\
\hline
Candidate & \multicolumn{7}{c}{Votes by Round}\\
\hline
Findlay&433&437&449&476&497.2&497.6&\\
Gillies&337&355&437&&&&\\
MacIntyre&395&402&444&491&519.0&519.5&\textbf{931.5}\\
McCallum&85&&&&&&\\
Moffat&472&494&614&\textbf{789}&&&\\
Scoullar&340&354&&&&&\\
Wallace&427&437&507&585&\textbf{629.0}&&\\
\end{tabular}\caption{(Top) The votes-by-round table for the 2017 election in the Isle of Bute Ward of the Argyll and Bute Council area. (Bottom) The votes-by-round table after adding 12 ballots with Findlay and Scoullar ranked in the top three and MacIntyre and Wallace ranked in the bottom two.}
\label{argyll-bute-table}

\end{table}

Suppose we add 12 ballots of the form \begin{center}
Gillies$\succ$Scoullar$\succ$Moffat$\succ$Findlay$\succ$Gillies$\succ$McCallum$\succ$MacIntyre$\succ$Wallace
\end{center} to the ballot data. Then the election would unfold as shown in the bottom of Table \ref{argyll-bute-table} and MacIntyre and Wallace replace Moffat and Scoullar in the winner set. This example  displays different dynamics than previous ones. In this case, we do not give a winning candidate more surplus votes which subsequently provide more transfer votes to a previously losing candidate. Instead, the addition of these 12 ballots allow Gillies to survive one round longer, causing Scoullar to be eliminated in the second round. Surprisingly, adding a handful of ballots on which Scoullar is ranked second causes him to fall from third place to sixth (out of seven). Eliminating Scoullar allows MacIntyre to survive to subsequent rounds, eventually winning a seat in the last round. The elimination of Scoullar also allows Moffat to achieve quota by a much larger margin than in the original election, allowing Wallace to earn the second seat in round 5. It is counterintuitive that adding a handful of ballots on which MacIntyre and Wallace are ranked at the bottom and Scoullar and Moffat are ranked in the top three could cause the former candidates to replace the latter in the winner set.

\end{example}

Finally, we present an example of an election which demonstrates paradoxes of negative and positive involvement. In this election we can add ballots such that the candidate ranked last on these ballots replaces the candidate ranked first on these ballots in the winner set.

\begin{example}\textbf{South Lanarkshire, 2022, Rutherglen Central and North Ward.} The original election unfolds as shown in the top of Table \ref{s-lanarks-table}. Lennon and Calikes win seats in rounds 2 and 3, respectively, and Cowan easily defeats Fox in round 5 to earn a seat in round 6. Note that Lennon achieves quota exactly in round 2, and thus does not transfer any surplus votes. If Lennon's election could be delayed so that he achieves quota by a sizable amount, McGinty would stand to benefit because McGinty is the only other candidate in the election who belongs to Lennons's party (Labour), and voters who rank Lennon first and rank another candidate tend to rank McGinty second.

Suppose we add four ballots of the form \begin{center}
Cowan$\succ$Calikes$\succ$Lennon$\succ$Adebo$\succ$Fox$\succ$McRae$\succ$McGinty
\end{center}
 to the ballot data. How could the addition of such ballots possibly be bad for Cowan? The bottom of Table \ref{s-lanarks-table} shows the effect of adding these ballots. With the additional ballots the quota is one vote higher, so that Lennon does not achieve quota in round 2. Instead Lennon achieves quota one round later and has 126 votes to transfer. Most of these votes go to McGinty, who now has enough support to win easily against Cowan in round 6. Because McGinty replaces Cowan in the winner set, this election demonstrates paradoxes of negative and positive involvement.

We can add up to 59 such ballots and achieve this double paradox.
\begin{table}

\begin{tabular}{c | c | c|c|c|c|c|c}
\multicolumn{8}{c}{\textbf{Actual Election}, Quota $=1270$}\\
\hline
Candidate & \multicolumn{7}{c}{Votes by Round}\\
\hline
Adebo&517&544&544&&&&\\
Calikes&1188&1265&1265&\textbf{1338}&&&\\
Cowan&725&796&796&840&899.6&1029.2&\textbf{1163.3}\\
Fox&609&612&612&730&730.4&881.5&\\
Lennon&1245&\textbf{1270}&&&&&\\
McGinty&531&567&567&716&717.9&&\\
McRae&261&&&&&&\\
\hline

\end{tabular}

\vspace{.1 in}
\begin{tabular}{c | c | c|c|c|c|c|c}
\multicolumn{8}{c}{\textbf{Modified Election}, Quota $=1271$}\\
\hline
Candidate & \multicolumn{7}{c}{Votes by Round}\\
\hline
Adebo&517&544&&&&&\\
Calikes&1188&1265&\textbf{1335}&&&&\\
Cowan&729&800&840&844.6&900.8&918.9&\\
Fox&609&612&721&725.4&725.8&&\\
Lennon&1245&1270&\textbf{1397}&&&&\\
McGinty&531&567&631&734.7&736.5&1044.0&\textbf{1356.7}\\
McRae&261&&&&&&\\
\hline

\end{tabular}
\caption{(Top) The votes-by-round table for the 2022 election in the Rutherglen Central and North Ward of the South Lanarkshire Council area. (Bottom) The votes-by-round table after adding 4 ballots with Cowan ranked first and McGinty ranked last.}
\label{s-lanarks-table}

\end{table}
\end{example}

We could find only one other election, the 2022 election in the Inverleith Ward of the City of Edinburgh Council area, which demonstrates paradoxes of negative and positive involvement. The dynamics are similar to the South Lanarkshire example:  adding extra ballots raises the quota, which delays the election of a candidate by one round so that this candidate has many more surplus votes to transfer.

\section{Conclusion}

Mathematically speaking, STV can be a funny voting method. As we have shown, the method can behave in pretty wild ways when we enlarge the electorate. More broadly, STV can also produce paradoxical outcomes when we remove ballots (producing no-show paradoxes) or when we shift candidates up or down the rankings of ballots (producing monotonicity paradoxes). Such strange outcomes point to a core of mathematical wackiness in the design of STV. Many of the examples in this article also illustrate STV's instability, in the sense that we can perturb the ballot data in relatively minor ways and produce large and paradoxical changes to the winner set in response. Whether the reader finds these issues problematic or not depends on their values and  approach to evaluating electoral procedures; at the very least, STV allows for some mathematical fun that other (perhaps more reasonable) methods do not.\\

\section{Appendix}

\subsection{How we searched the Scottish dataset for involvement paradoxes.} Given an election, we are unaware of necessary and sufficient conditions for that election to demonstrate an involvement paradox. Therefore, we cannot guarantee that the 99 elections listed below represent a complete list of elections which demonstrate a paradox in the Scottish dataset. Because we do not have conditions for checking for an involvement paradox, we have to be a bit creative in our search. We took the following approach, using Python 3 code. For each election, for each candidate $C$ we added bullet votes for $C$ (ballots with $C$ ranked first and no other candidates ranked) and checked for a change in the winner set. To be precise, if there were originally $k$ bullet votes for $C$ in the election then we added 1 bullet vote for $C$ to the original data and checked for a change in the winner set, then 2 bullet votes for $C$, etc., all the way up to $k$ additional bullet votes. We could not keep adding bullet votes for $C$ forever, and we decided to go up to doubling each candidate's bullet vote count and then move on to the next candidate.

We took this approach because for an involvement paradox to manifest we must find a set of identical ballots $\mathcal{B}$ to add to the original set of ballots, and some candidate must be ranked at the top of these ballots. For a given election, if our approach found a change in the winner set then we could attempt to fill in the bullet votes to a complete ranking and see if a paradox occurred.

For example, in the South Lanarkshire example (Example 6), the code reported that if we add 4 bullet votes for Cowan to the original ballot data then the winner set changes from $\{$Calikes, Cowan, Lennon$\}$ to $\{$Calikes, Lennon, McGinty$\}$. We then filled in those four bullet votes to ballots of the form 
\begin{center}
Cowan$\succ$Calikes$\succ$Lennon$\succ$Adebo$\succ$Fox$\succ$McRae$\succ$McGinty
\end{center}
and checked that McGinty would still replace Cowan in the winner set. Once we knew a paradox was present, we used additional code to find the range of ballots which can be added, which is how we can report a maximum of 59 such ballots to engineer the paradox.

This methodology using bullet votes can only provide a list of elections to investigate, it cannot guarantee that a paradox will be found. There were 31 elections in which adding bullet votes for at least one candidate caused a change to the winner set, but we could not find a way to engineer an involvement paradox. It is possible that we missed some elections which demonstrate an involvement paradox; the bullet vote methodology was meant to cast a wide net, but it is possible the net was not wide enough.

\subsection{Elections which demonstrate an involvement paradox.} The elections are reported in the format Council Area, Year, Ward. We also report party information for the winning candidates, using the following abbreviations: Conservative (Con), Green (Grn), Independent (Ind), Labour (Lab), Liberal Democrat (LD), Scottish National Party (SNP). For each election we provide the type of ballot we add in order to engineer an involvement paradox; all such reported ballots provide a complete ranking of the candidates in the given election.

\begin{itemize}
\item \textbf{Aberdeenshire, 2012, Ward 7 (Turriff and District)}

%NOTE: We cannot make the paradox occur with the winners at the top of the ballot.

Winners: Duncan (SNP), Norrie (Ind), Robertson (LD)

If we add 40-70 ballots of the form below then Mair (SNP) replaces Norrie in the winner set.

\begin{center}
Duff$\succ$Norrie$\succ$Duncan$\succ$Robertson$\succ$Ogston$\succ$Mair
\end{center}

\item  \textbf{Aberdeenshire, 2017, Ward 5 (Peterhead North and Rattray)}

Winners: Allan (SNP), Beagrie (Con), Buchan (Ind), Sutherland (Ind)

If we add 97-526 ballots of the form below then McRae (SNP) replaces Sutherland in the winner set.

\begin{center}
Allan$\succ$Sutherland$\succ$Buchan$\succ$Beagrie$\succ$Massey$\succ$McRae
\end{center}

\item \textbf{Aberdeenshire, 2017, Ward 9 (Ellon and District)} (downward paradox election)

Winners: Davidson (LD), Kahanov-Kloppert (SNP), Owen (Con), Thomson (SNP)

If we add 21-466 ballots of the form below then Morgan (Lab) replaces Kahanov-Kloppert in the winner set.

\begin{center}
Davidson$\succ$Kahanov-Kloppert$\succ$Owen$\succ$Thomson$\succ$Morgan\\
\end{center}

\item \textbf{Aberdeenshire, 2022, Ward 4 (Central Buchan)}

Winners: Crowson (SNP), Mair (SNP), Powell (Con), Simpson (LD)

If we add 26-233 ballots of the form below then Owen (Con) replaces Crowson in the winner set.

If we add 234-2856 ballots of the form below then Owen replaces Mair in the winner set.

\begin{center}
Powell$\succ$Crowson$\succ$Simpson$\succ$Mair$\succ$Cole-Hamilton

$\succ$Cross$\succ$MahMood$\succ$Moore$\succ$Smith$\succ$Owen
\end{center}

\item \textbf{Aberdeenshire, 2022, Ward 8 (Mid-Formartine)}

%NOTE: We can't make the paradox occur by ranking the winners at the top of ballots.

Winners: Hassan (LD), Johnston (Ind), Nicol (SNP), Ritchie (Con)

If we add 25-94 ballots of the form below then Powell (Con) replaces Hassan in the winner set.

\begin{center}
Hutchison$\succ$Hassan$\succ$Nicol$\succ$Johnston$\succ$Ritchie$\succ$Powell
\end{center}

\item \textbf{Aberdeenshire, 2022, Ward 19 (Mearns)}

Winners: Carnie (Con), Carr (Con), Evison (Ind), Stelfox (SNP)

If we add 163-395 ballots of the form below then Ewen (LD) replaces Carnie in the winner set.

\begin{center}
Stelfox$\succ$Evison$\succ$Carnie$\succ$Carr$\succ$Allan$\succ$Fraser

$\succ$Laurenson$\succ$Neill$\succ$Stewart$\succ$Wilson$\succ$Ewen
\end{center}

If we add 396-510 ballots of the form below Stewart (Ind) replaces Carnie in the winner set.

\begin{center}
Stelfox$\succ$Evison$\succ$Carnie$\succ$Carr$\succ$Allan$\succ$Fraser

$\succ$Laurenson$\succ$Neill$\succ$Wilson$\succ$Ewen$\succ$Stewart
\end{center}

\item \textbf{Aberdeen City, 2012, Ward 2 (Torry - Ferryhill)}

Winners: Allan (Lab), Kiddie (SNP), Dickson (SNP), Donnelly (Con)

If we add 399-540 ballots of the form below then Russell (Lab) replaces Dickson in the winner set.

If we add 541-3350 ballots  of the form below then Russell replaces Donnelly in the winner set.

\begin{center}
Allan$\succ$Kiddie$\succ$Dickson$\succ$Donnelly$\succ$Brown$\succ$Fryer

$\succ$Kelly$\succ$MacKay$\succ$Reekie$\succ$Robertson$\succ$Watson$\succ$Russell
\end{center}

\item \textbf{Aberdeen City, 2017, Ward 1 (Dyce/Bucksburn/Danestone)}

Winners: Crockett (Lab), MacGregor (SNP), Samari (SNP), MacKenzie (Con)

If we add 832-1439 ballots of the form below then Duthie (Ind) replaces Samari in the winner set.

\begin{center}
MacKenzie$\succ$Crockett$\succ$Samari$\succ$MacGregor$\succ$Lawrence$\succ$Pearce$\succ$Duthie
\end{center}

If we add 1440-2585 ballots of the form below thenPearce (LD) replaces Samari in the winner set.

\begin{center}
MacKenzie$\succ$Crockett$\succ$Samari$\succ$MacGregor$\succ$Lawrence$\succ$Duthie$\succ$Pearce
\end{center}

\item \textbf{Aberdeen City, 2017, Ward 2 (Bridge of Don)}

Winners: Alphonse (SNP), Hunt (Con), Reynolds (Ind), Stuart (SNP)

If we add 242-2909 ballots of the form below then Farquhar (LD) replaces Stuart in the winner set.

\begin{center}
Hunt$\succ$Reynolds$\succ$Stuart$\succ$Alphonse$\succ$Irving-Lewis

$\succ$McLean$\succ$Saunders$\succ$Young$\succ$Farquhar
\end{center}

\item \textbf{Aberdeen City, 2022, Ward 8 (George St-Harbour)}

Winners: Bouse (LD), Henrickson (SNP), Hutchison (SNP), MacDonald (Lab)

If we add 25-149 Ballots of the form below then Ingerson (Grn) replaces Bouse in the winner set.

\begin{center}
Painter$\succ$Bouse$\succ$Henrickson$\succ$Hutchison$\succ$MacDonald$\succ$Chaudry$\succ$Ingerson
\end{center}

\item \textbf{Angus, 2017, Ward 4 (Monifieth and Sidlaw)}

Winners: Fotheringham (Con), Hands (SNP), Lawrie (LD), Whiteside (SNP)

If we add 291-1907 ballots of the form below then Strachan (Lab) replaces Whiteside in the winner set.

\begin{center}
Fotheringham$\succ$Lawrie$\succ$Whiteside$\succ$Hands$\succ$Strachan
\end{center}

\item \textbf{Angus, 2022, Ward 6 (Arbroath West Letham and Friockheim)}

%NOTE: WE can't make the paradox happen with the winners at the top of the ballots.

Winners: Cowdy (SNP), Fairweather (Ind), Nicol (Con), Shepherd (SNP)

If we add 75-385 ballots of the form below then Wren (Ind) replaces Fairweather in the winner set.

\begin{center}
Ruddy$\succ$Fairweather$\succ$Cowdy$\succ$Nicol$\succ$Shepherd

$\succ$Campbell$\succ$Falconer$\succ$Keogh$\succ$Vivers$\succ$Wren
\end{center}

\item \textbf{Argyll and Bute, 2012, Ward 5 (Oban North and Lorn)}

Winners: Glen-Lee (SNP), MacDonald (Ind), MacIntyre (Ind), Robertson (Ind)

If we add 14-479 ballots of the form below then Melville (SNP) replaces MacDonald in the winner set.

\begin{center}
Glen-Lee$\succ$MacDonald$\succ$MacIntyre$\succ$Robertson$\succ$Doyle
$\succ$MacKay$\succ$McIntosh$\succ$Neal$\succ$Rutherford$\succ$Melville
\end{center}

\item \textbf{Argyll and Bute, 2017, Ward 8 (Isle of Bute)}

%NOTE: We can't make this happen ranking the winners at the top of the ballot. But we get a cute double replacement.

Winners: Findlay (SNP), Moffat (Ind), Scoullar (Ind)

If we add 12-18 ballots of the form below then MacIntyre (SNP) and Wallace (Con) replace Moffat and Scoullar in the winner set. (We could also rank MacIntyre last and Wallace second-to-last.)

If we add 20-39 ballots of the form below then Wallace replaces Scoullar in the winner set.

\begin{center}
Gillies$\succ$Scoullar$\succ$Moffat$\succ$Findlay$\succ$Gillies$\succ$McCallum$\succ$MacIntyre$\succ$Wallace
\end{center}

\item \textbf{Argyll and Bute, 2021 by-election, Ward 8 (Isle of Bute)}

Winner: McCabe (Ind)

If we add 26-38 ballots of the form below then Findlay (SNP) becomes the winner.

\begin{center}
Gillies$\succ$McCabe$\succ$MacDonald$\succ$Wallace$\succ$Findlay
\end{center}

\item \textbf{Argyll and Bute, 2022, Ward 4 (Oban South and the Isles)}

Winners: Hampsey (Con), Hume (SNP), Kain (Ind), Lynch (SNP)

If we add 4-8 ballots of the form below then Meyer (Grn) replaces Kain in the winner set.

\begin{center}
Lynch$\succ$Hume$\succ$Kain$\succ$Hampsey$\succ$Ageer$\succ$Boswell

$\succ$Campbell$\succ$Kennedy$\succ$McGrigor$\succ$Watson$\succ$Meyer
\end{center}

\item \textbf{Argyll and Bute, 2022, Ward 8 (Isle of Bute)}

Winners: Kennedy-Boyle (SNP), McCabe (Ind), Wallace (Con)

If we add 18-140 ballots of the form below then Moffat (Ind) replaces Wallace in the winner set.

\begin{center}
Kennedy-Boyle$\succ$Wallace$\succ$McCabe$\succ$Gillies$\succ$Malcolm

$\succ$McFarlane$\succ$McGowan$\succ$Stuart$\succ$Moffat
\end{center}

\item \textbf{Argyll and Bute, 2022, Ward 9 (Lomond North)}

Winners: Corry (Con), Irvine (Ind), Paterson (SNP)

If we add 63-1003 ballots of the form below then Freeman (Ind) replaces Irvine in the winner set.

\begin{center}
Corry$\succ$Paterson$\succ$Irvine$\succ$MacIntyre$\succ$Millar$\succ$Robinson$\succ$Freeman
\end{center}

\item \textbf{City of Edinburgh, 2012, Ward 4 (Forth)}

Winners: Cardownie (SNP), Day (Lab), Jackson (Con), Redpath (Lab)

If we add 1002-3052 ballots of the form below then Gordon (SNP) replaces Redpath in the winner set.

\begin{center}
Jackson$\succ$Day$\succ$Cardownie$\succ$Redpath$\succ$Henderson

$\succ$Joester$\succ$MacMhicean$\succ$Wight$\succ$Gordon
\end{center}

\item \textbf{City of Edinburgh, 2012, Ward 16 (Liberton Gilmerton)}

Winners: Austin-Hart (Lab), Buchanan (SNP), B. Cook (Lab), N. Cook (Con)

If we add 36-1480 ballots of the form below then Howie (SNP) replaces N. Cook in the winner set.
\begin{center}
Buchanan$\succ$N. Cook$\succ$B. Cook$\succ$Austin-Hart$\succ$Carter$\succ$Fox$\succ$Knox$\succ$Howie\\
\end{center}

\item \textbf{City of Edinburgh, 2017, Ward 4 (Forth)}

Winners: Bird (SNP), Campbell (Con), Day (Lab), Gordon (SNP)

If we add 931-3079 ballots of the form below then Wight (LD) replaces Gordon in the winner set. 

\begin{center}
Campbell$\succ$Day$\succ$Gordon$\succ$Bird$\succ$MacKay$\succ$Pugh$\succ$Ross$\succ$Wight \\
\end{center}

\item \textbf{City of Edinburgh, 2017, Ward 5 (Inverleith)}

Winners: Barrie (SNP), Mitchell (Con), Osler (LD), Whyte (Con)

If we add 508-4404 ballots of the form below then Bagshaw (Grn) replaces Whyte in the winner set. 

If we add 4405-10935 ballots of the form below then Bagshaw replaces Mitchell in the winner set.

\begin{center}
Barrie$\succ$Osler$\succ$Whyte$\succ$Mitchell$\succ$Woolnough$\succ$Dalgleish$\succ$Laird$\succ$Bagshaw\\
\end{center}

\item  \textbf{City of Edinburgh, 2017, Ward 7 (Sighthill/Gorgie)}

Winners: Dixon (SNP), Fullerton (SNP), Graczyk (Con), Wilson (Lab)

If we add 678-1647 ballots of the form below then Heap (Grn) replaces Dixon in the winner set.

\begin{center}
Graczyk$\succ$Wilson$\succ$Dixon$\succ$Fullerton$\succ$Hayter$\succ$Scobie$\succ$Smith$\succ$Strange$\succ$Heap
\end{center}

\item \textbf{City of Edinburgh, 2017, Ward 8 (Colinton/Fairmilehead)}

Winners: Arthur (Lab), Doggart (Con), Rust (Con)

If we add 101-1605 ballots of the form below then Lewis (SNP) replaces Doggart in the winner set.

\begin{center}
Arthur$\succ$Doggart$\succ$Rust$\succ$Walker$\succ$Marsden$\succ$Lewis
\end{center}

\item  \textbf{City of Edinburgh, 2022, Ward 2 (Pentland Hills)}

Winners: Bruce (Con), Gardiner (SNP), Glasgow (SNP), Jenkinson (Lab)

If we add 24-2060 ballots of the form below then Gilbhrist (Con) replaces Glasgow in the winner set.

\begin{center}
Bruce$\succ$Glasgow$\succ$Jenkinson$\succ$Gardiner$\succ$Chappell$\succ$Fettes$\succ$Muller$\succ$Rowlands$\succ$Gilchrist
\end{center}

\item \textbf{City of Edinburgh, 2022, Ward 5 (Inverleith)}

%This election demonstrates paradoxes of weak negative and positive involvement.

Winners: Bandel (Grn), Mitchell (Con), Nicolson (SNP), Osler (LD)

If we add 71-113 ballots of the form below then Munro-Brian (Lab) replaces Bandel in the winner set.

\begin{center}
Bandel$\succ$Mitchell$\succ$Nicolson$\succ$Osler$\succ$Wood$\succ$

Herring$\succ$Holden$\succ$Laird$\succ$McNamara$\succ$Munro-Brian
\end{center}

\item \textbf{City of Edinburgh, 2022, Ward 11 (City Center)}

Winners: Graham (Lab), McFarlane (SNP), Miller (Grn), Mowat (Con)

If we add 71-2103 ballots of the form below then Foxall (LD) replaces Graham in the winner set.

\begin{center}
Mowat$\succ$Miller$\succ$Graham$\succ$McFarlane$\succ$Bob$\succ$Carson$\succ$Illingworth

$\succ$Mwiki$\succ$Pakpahan-Campbell$\succ$Penman$\succ$Rowan$\succ$Shaw$\succ$Foxall
\end{center}

\item \textbf{City of Edinburgh, 2022, Ward 16 (Liberton-Gilmerton)}

Winners: Cameron (Lab), Doggart (Con), MacInnes (SNP), Mattos-Coelho (SNP)

If we add 305-6574 ballots of the form below then Measom (Lab) replaces Mattos-Coelho in the winner set.

\begin{center}
Cameron$\succ$Doggart$\succ$Mattos-Coelho$\succ$MacInnes

$\succ$Christie$\succ$Fox$\succ$Meron$\succ$Nichol$\succ$Planche$\succ$Measom
\end{center}

We can also make the paradox occur in a weaker fashion so that Nichol (Grn) earns a seat. If we add 672-898 ballots of the form below then Nichol replaces Mattos-Coelho in the winner set.

\begin{center}
Doggart$\succ$MacInnes$\succ$Cameron$\succ$Christie$\succ$Fox$\succ$Measom

$\succ$Mattos-Coelho$\succ$Meron$\succ$Planche$\succ$Nichol
\end{center}

\item \textbf{Clackmannanshire, 2017, Ward 3 (Clackmannanshire Central)}

Winners: Fairlie (SNP), Stewart (Lab), Watson (Con)

If we add 279-1004 ballots of the form below then Watt replaces Fairlie in the winner set.

\begin{center}
Stewart$\succ$Fairlie$\succ$Watson$\succ$Short$\succ$Wilkinson$\succ$Watt
\end{center}

\item \textbf{Dumfries and Galloway, 2017, Ward 9 (Nith)}

%NOTE: This is worth analysis. The transition to the change in winners is dramatic.

Winners: Campbell (SNP), Johnstone (Con), Martin (Lab), Murray (Lab)

If we add 317-1161 ballots of the form below then Slater (Ind) replaces Martin in the winner set.

\begin{center}
Johnstone$\succ$Campbell$\succ$Martin$\succ$Murray$\succ$Cowan$\succ$

Crosbie$\succ$Dennis$\succ$Rogerson$\succ$Witts$\succ$Slater
\end{center}

\item \textbf{Dumfries and Galloway, 2022, Ward 12 (Annandale East and Eskdale)}

%NOTE: This one doesn't work by ranking winners at top of the ballots.

Winners: Carruthers (Con), Dryburgh (Lab), Male (Ind)

If we add 27-93 ballots of the form below then Tait (Con) replaces Male in the winner set.

\begin{center}
Willmot$\succ$Male$\succ$Carruthers$\succ$Dryburgh$\succ$Herbst-Gray$\succ$Mattock$\succ$Tait
\end{center}

\item \textbf{Dundee, 2012, Ward 8 (Ferry)}

Winners: Bidwell (Lab), Cordell (SNP), Scott (Con)

If we add 185-285 ballots of the form below then Wallace (Con) replaces Cordell in the winner set.

If we add 186-2673 ballots of the form below then Wallace replaces Bidwell in the winner set.

\begin{center}
Scott$\succ$Cordell$\succ$Bidwell$\succ$Duncan$\succ$Guild$\succ$Wallace
\end{center}

\item \textbf{East Ayrshire, 2012, Ward 5 (Kilmarnock South)}

Winners: Knapp (Lab), Ross (SNP), Todd (SNP)

If we add 11-754 ballots of the form below then Scott (Lab) replaces Ross in the winner set.

\begin{center}
Knapp$\succ$Todd$\succ$Ross$\succ$Holden$\succ$Scott
\end{center}

\item \textbf{East Ayrshire, 2012, Ward 7 (Ballochmyle)}

Winners: McGhee (Lab), Primrose (SNP), Roberts (SBP), Shaw (Lab)

If we add 67-334 ballots of the form below then Murray (Lab) replaces Primrose in the winner set.

\begin{center}
McGhee$\succ$Primrose$\succ$Roberts$\succ$Shaw$\succ$Martin$\succ$Murray
\end{center}

\item \textbf{East Ayrshire, 2012, Ward 9 (Doon Valley)}

Winners: Bell (SNP), Dinwoodie (Lab), Pirie (Lab)

If we add 129-311 ballots of the form below then Borthwick (Ind) replace Pirie in the winner set.

\begin{center}
Bell$\succ$Pirie$\succ$Dinwoodie$\succ$Filson$\succ$Grant$\succ$Borthwick
\end{center}

\item \textbf{East Ayrshire, 2017, Ward 6 (Irvine Valley)}

Winners: Cogley (Rubbish), Mair (Lab), Whitham (SNP)

If we add 33-34 ballots of the form below then McFadzean (Con) replaces Mair in the winner set.

\begin{center}
Whitham$\succ$Mair$\succ$Cogley$\succ$Brannagan$\succ$Gartland$\succ$King$\succ$Young$\succ$McFadzean
\end{center}

\item \textbf{East Ayrshire, 2017, Ward 8 (Cumnock and New Cumnock)}

Winners: Crawford (Lab), McMahon (SNP), Todd (SNP), Young (Con)

If we add 22-3556 ballots of the form below then Mochan (Lab) replaces McMahon in the winner set.

\begin{center}
Crawford$\succ$Young$\succ$Todd$\succ$McMahon$\succ$Bircham$\succ$Owens$\succ$Black$\succ$Mochan
\end{center}

\item \textbf{East Dunbartonshire, 2017, Ward 6 (Lenzie and Kirkintilloch South)}

%NOTE: We can't make this work by ranking the original winners at the top of the ballots

Winners: Ackland (LD), Renwick (SNP), Thornton (Con)

If we add 123-174 ballots of the form below then Taylor (Ind) replaces Ackland in the winner set.

\begin{center}
Geekie$\succ$Ackland$\succ$Renwick$\succ$Thornton$\succ$Robertson$\succ$Scrimgeour$\succ$Sinclair$\succ$Taylor
\end{center}

\item \textbf{East Dunbartonshire, 2022, Ward 4 (Bishopbriggs North \& Campsie)}

Winners: Ferretti (SNP), Hendry (Con), McDiarmid (Lab), Williamson (SNP)

If we add 483-4329 ballots of the form below then Pews (LD) replaces Williamson in the winner set.

\begin{center}
Hendry$\succ$McDiarmid$\succ$Williamson$\succ$Ferretti$\succ$Gallacher$\succ$Harris$\succ$Rowan$\succ$Pews
\end{center}

\item \textbf{East Lothian, 2017, Ward 1 (Musselburgh)}

Winners: Currie (SNP), Forrest (Lab), Mackie (Con), Williamson (SNP)

If we add 793-806 ballots of the form below then Caldwell (Ind) replaces Williamson in the winner set.

\begin{center}
Mackie$\succ$Currie$\succ$Forrest$\succ$Williamson$\succ$Graham

$\succ$Rose$\succ$Sangster$\succ$Sives$\succ$Caldwell\\
\end{center}

\item \textbf{East Lothian, 2022, Ward 1 (Musselburgh)}

Winners: Bennett (Lab), Cassini (SNP), Forrest (SNP), McIntosh (Grn)

If we add 70-259 ballots of the form below then Whyte replaces McIntosh in the winner set.

\begin{center}
Cassini$\succ$McIntosh$\succ$Forrest$\succ$Bennett$\succ$Butts$\succ$Carter$\succ$Graham$\succ$Mackie$\succ$Whyte
\end{center}

\item \textbf{Eilean Siar, 2012, Ward 7 (Steornabhagh a Tuath)}

Note: This is a weak example of the paradox. We cannot put all of the winners at the top of the ballot, and the candidate who gets replaced cannot be put in the top $S$ rankings.

Winners: MacAulay (Ind), R. MacKay (Ind), MacKenzie (Ind), G. Murray (SNP)

If we add 4-72 ballots of the form below then Ahmed (SNP) replaces MacAulay in the winner set.

\begin{center}
Campbell$\succ$R. MacKay$\succ$MacKenzie$\succ$G. Murray$\succ$MacAulay

$\succ$J. MacKay$\succ$Morrison$\succ$M. Murray$\succ$Paterson$\succ$Ahmed
\end{center}

\item \textbf{Falkirk, 2012, Ward 1 (Bo'ness and Blackness)}

Winners: Mahoney (Lab), Ritchie (SNP), Turner (SNP)

If we add 12-1378 ballots of the form below then Aitchison (Lab) replaces Turner in the winner set.

\begin{center}
Mahoney$\succ$Turner$\succ$Ritchie$\succ$Munro$\succ$Aitchison
\end{center}

\item \textbf{Falkirk, 2012, Ward 9 (Upper Braes)}

Winners: Hughes (SNP), McLuckie (Lab), Murray (Lab)

If we add 115-818 ballots of the form below then Wilson (SNP) replaces Murray in the winner set.

\begin{center}
Hughes$\succ$Murray$\succ$McLuckie$\succ$Rust$\succ$Wilson
\end{center}

\item \textbf{Falkirk, 2022, Ward 2 (Grangemouth)}

%Note: Can't make the paradox happen when keeping original winners at top of the ballot.

Winners: Balfour (SNP), Nimmo (Lab), Spears (Ind)

If we add 9- ballots of the form below then Haston (SNP) replaces Spears in the winner set.

\begin{center}
Bryson$\succ$Spears$\succ$Balfour$\succ$Nimmo$\succ$Bozza$\succ$Martin$\succ$Stenhouse$\succ$Haston
\end{center}

\item \textbf{Falkirk, 2022, Ward 6 (Falkirk North)}

Winners: Bissett (Lab), Bundy (Con), Meiklejohn (SNP), Sinclair (SNP)

If we add 86-164 ballots of the form below then Burgess (Lab) replaces Bundy in the winner set.

\begin{center}
Bissett$\succ$Bundy$\succ$Meiklejohn$\succ$Sinclair$\succ$Arshad$\succ$McLaughlin$\succ$Burgess
\end{center}

\item \textbf{Fife, 2012, Ward 6 (Inverkeithing and Dalgety Bay)}

Winners: Dempsey (Con), Laird (Lab), McGarry (SNP), Yates (Lab)

If we add 61-1133 ballots of the form below then Todd (SNP) replaces Yates in the winner set.

If we add 1134-4126 ballots of the form below then Todd replaces Laird in the winner set.

\begin{center}
McGarry$\succ$Dempsey$\succ$Yates$\succ$Laird$\succ$Arthur$\succ$Walker$\succ$Todd
\end{center}

\item \textbf{Fife, 2017, Ward 6 (Inverkeithing and Dalgety Bay)}

Winners: Barratt (SNP), Dempsey (Con), Laird (Lab), McGarry (SNP)

If we add 310-583 ballots of the form below then Cannon-Todd (Ind) replaces Barratt in the winner set.

\begin{center}
Laird$\succ$Barratt$\succ$McGarry$\succ$Dempsey$\succ$Hansen$\succ$Hawthorne$\succ$Cannon-Todd
\end{center}

\item \textbf{Fife, 2017, Ward 12 (Kirkaldy East)}

%This one is interesting; there's a change of order so when switching from 27 to 28 ballots, all of a sudden the losing candidate crushes it.

Winners: Cameron (Lab), Cavanagh (SNP), Watt (Con)

If we add 28-365 ballots of the form below then Penman (Ind) replaces Watt in the winner set.

\begin{center}
Cameron$\succ$Watt$\succ$Cavanagh$\succ$Cameron$\succ$Forbes

$\succ$McMahon$\succ$Ritchie$\succ$Rottger$\succ$Penman\\
\end{center}

\item \textbf{Fife, 2022, Ward 5 (Rosyth)}

Winners: Verrecchia (Lab), Goodall (SNP), Jackson (SNP)

If we add 136-540 ballots of the form below then Thomson (Con) Jackson in the winner set.

\begin{center}
Verrecchia$\succ$Goodall$\succ$Jackson$\succ$Kittle$\succ$Lynas

$\succ$McIntyre$\succ$McOwan$\succ$Morton$\succ$Thomson
\end{center}

\item \textbf{Fife, 2022, Ward 7 (Cowdenbeath)}

Winners: Campbell (Lab), Bain (SNP), Watt (Con), Robb (SNP)

If we add 9-3001 ballots of the form below then Guichan (Lab) replaces Robb in the winner set.

\begin{center}
Campbell$\succ$Bain$\succ$Watt$\succ$Robb$\succ$Bijster$\succ$Greig$\succ$Venturi$\succ$Lee$\succ$Guichan
\end{center}

\item \textbf{Fife, 2022, Ward 12 (Kirkaldy East)}

Winners: Cameron (Lab), Cavanagh (SNP), Patrick (SNP)

If we add 89-200 ballots of the form below then MacDonald (Lab) replaces Patrick in the winner set.

\begin{center}
Cameron$\succ$Patrick$\succ$Cavanagh$\succ$Hansen$\succ$Neilson$\succ$Thompson$\succ$Watt$\succ$MacDonald
\end{center}

\item \textbf{Glasgow City, 2007, Baillieston}

Winners: Coleman (Lab), Hay (Lab), Mason (SNP), McDonald (SNP)

If we add 596-5311 ballots of the form below then MacBean (Lab) replaces McDonald in the winner set.

\begin{center}
Colman$\succ$Mason$\succ$Hay$\succ$McDonald$\succ$Clark$\succ$Dickie

$\succ$Kayes$\succ$McVicar$\succ$Morrison$\succ$Watt$\succ$MacBean
\end{center}

\item \textbf{Glasgow City, 2007, Craigton}

Winners: Black (Sol), Gibson (SNP), Kerr (Lab), Watson (Lab)

If we add 491-8946 ballots of the form below then MacDiarmid (Lab) replaces Black in the winner set.

\begin{center}
Kerr$\succ$Gibson$\succ$Watson$\succ$Black$\succ$Coghill$\succ$Dingwall
$\succ$Masterton$\succ$McGartland$\succ$Petty$\succ$MacDiarmid
\end{center}

\item \textbf{Glasgow City, 2007, Ward 6 (Pollokshields)}

Winners: Malik (SNP), Meikle (Con), Rabbani (Lab).

If we add 378-644 ballots of the form below then Ruffell (Grn) replaces Meikle in the winner set.

\begin{center}
Malik$\succ$Meikle$\succ$Rabbani$\succ$Ashraf$\succ$Currie$\succ$Nelson$\succ$Shoaib$\succ$Uygun$\succ$Ruffell
\end{center}

\item \textbf{Glasgow City, 2017, Ward 9 (Calton)}

Winners: Connelly (Con), Hepburn (SNP), Layden (SNP), O'Lone (Lab)

If we add 123-952 ballots of the form below then Rannachan (Lab) replaces Connelly in the winner set.

\begin{center}
O'Lone$\succ$Connelly$\succ$Hepburn$\succ$Layden$\succ$MacPherson

$\succ$McGurk$\succ$McLaren$\succ$Pike$\succ$Rannachan
\end{center}

\item \textbf{Glasgow City, 2017, Ward 19 (Shettleston)}

Winners: Doherty (SNP), Ferns (SNP), T. Kerr (Con), McAveety (Lab)

If we add 521-7491 ballots of the form below then Simpson (Lab) replaces Ferns in the winner set.

\begin{center}
McAveety$\succ$Doherty$\succ$Ferns$\succ$T. Kerr$\succ$Campbell$\succ$Cocozza

$\succ$Corran$\succ$A. Kerr$\succ$Marshall$\succ$Pollard$\succ$Robertson$\succ$Simpson
\end{center}

\item \textbf{Glasgow City, 2022, Ward 7 (Langside)}

Winners: Aitken (SNP), Bruce (Grn), Docherty (Labour), Leinster (SNP)

If we add 83-3479 ballots of the form below then McKenzie (Lab) replaces Leinster in the winner set.

\begin{center}
Docherty$\succ$Leinster$\succ$Aitken$\succ$Bruce$\succ$Osuchukwu

$\succ$Shields$\succ$Stevenson$\succ$Whyte$\succ$McKenzie
\end{center}

\item \textbf{Glasgow City, 2022, Ward 13 (Garscadden-Scotstounhill)}

%NOTE: Can't make this work using the original winner set at the top of the ballots.

Winners: Butler (Lab), Cunningham (SNP), Mitchell (SNP), Murray (Lab)

If we add 571-678 ballots of the form below then Ugbah (SNP) replaces Cunningham in the winner set.

If we add 679-1217 ballots of the form below then Morrison and Ugbah replace Mitchell and Murray in the winner set.

\begin{center}
Morrison$\succ$Mitchell$\succ$Butler$\succ$Cunningham$\succ$Murray$\succ$Hamelink$\succ$Waterfield$\succ$Ugbah
\end{center}

\item \textbf{Glasgow City, 2022, Ward 19 (Shettleston)}

Winners: Doherty (SNP), Kerr (Con), McAveety (Lab), Pidgeon (Lab)

If we add 100-3212 ballots of the form below then Turner replaces Kerr in the winner set.

\begin{center}
Doherty$\succ$Pidgeon$\succ$Kerr$\succ$McAveety$\succ$Christie$\succ$McLaughlin$\succ$Sullivan$\succ$Turner
\end{center}

\item \textbf{Glasgow City, 2022, Ward 23 (Patrick East-Kelvindale)}

%Note: We can't make the paradox happen with the winners ranked at the top of the ballot.

Winners: Anderson (Grn), Brown (Lab), Johnstone (Lab), McLean (SNP)

If we add 123-558 ballots of the form below then Asghar (Con) replaces Johnstone in the winner set.

\begin{center}
Wilson$\succ$Johnstone$\succ$Brown$\succ$Anderson$\succ$McLean

$\succ$McMillan$\succ$Moohan$\succ$Nwaokorobia$\succ$Asghar
\end{center}

\item \textbf{Highland, 2012, Ward 7 (Cromarty Firth)}

Winners: Finlayson (Ind), Rattray (LD), Smith (SNP), Wilson (Ind)

If we add 1-58 ballots of the form below then Fletcher (SNP) replaces Rattray in the winner set.

\begin{center}
Smith$\succ$Rattray$\succ$Wilson$\succ$Finlayson$\succ$MacInnes$\succ$McCaffery$\succ$Rous$\succ$Fletcher
\end{center}

\item \textbf{Highland, 2012, Ward 20 (Inverness South)}

Winners: Caddick (LD), Crawford (Ind), Gowans (SNP), Prag (LD)

If 23 add 15-617 ballots of the form below then B. Boyd (SNP) replaces Prag in the winner set.

\begin{center}
Gowans$\succ$Prag$\succ$Caddick$\succ$Crawford$\succ$Bonsor

$\succ$D. Boyd$\succ$MacKenzie$\succ$B. Boyd
\end{center}

If we add 83-133 ballots of the form below then MacKenzie (Lab) replaces Prag in the winner set.

\begin{center}
Crawford$\succ$Caddick$\succ$Prag$\succ$Gowans$\succ$Bonsor

$\succ$B. Boyd$\succ$D. Boyd$\succ$MacKenzie
\end{center}

\item \textbf{Highland, 2012, Ward 22 (Fort William and Ardnamurchan)}

Winners: Baxter (Ind), Gormley (SNP), MacLennan (Ind), Murphy (Lab).

If we add 21-146 ballots of the form below  then Corrigan (Ind) replaces MacLennan in the winner set.

\begin{center}
Baxter$\succ$MacLennan$\succ$Gormley$\succ$Murphy$\succ$Gillespie

$\succ$MacDonald$\succ$Mackie$\succ$Corrigan
\end{center}

\item \textbf{Highland, 2017, Ward 2 (Thurso and Northwest Caithness)}

Winners: MacKay (Ind), Mackie (Con), Reiss (Ind), Rosie (SNP)

If we add 259-848 ballots of the form below then Coghill (Ind) replaces MacKay in the winner set.

\begin{center}
Reiss$\succ$Rosie$\succ$MacKay$\succ$Mackie$\succ$Saxon$\succ$Farmer$\succ$Glasgow$\succ$Owsnett$\succ$Coghill
\end{center}

\item \textbf{Highland, 2017, Ward 4 (East Sutherland and Edderton)}

%NOTE: We can't make this work ranking the original winners at the top of the ballots.

Winners: Gale (LD), MacKay (Lab), McGillivray (Ind)

If we add 117-184 ballots of the form below then Phillips (SNP) replaces McGillivray in the winner set.

\begin{center}
Short$\succ$McGillivray$\succ$Gale$\succ$MacKay$\succ$Christian$\succ$Gunn$\succ$Phillips
\end{center}

\item \textbf{Highland, 2017, Ward 17 (Culloden and Ardersier)}

%NOTE: We can't make this work ranking the original winners at the top of the ballots.

Winners: Balfour (Ind), Campbell-Sinclair (SNP), Robertson (LD)

If we add 25-36 ballots of the form below then Cullen (Con) replaces Robertson in the winner set.

If we add 37-488 ballots of the form below then Cullen and Munro replace Campbell-Sinclair and Robertson in the winner set.

\begin{center}
Munro$\succ$Robertson$\succ$Balfour$\succ$Campbell-Sinclair

$\succ$Lamont$\succ$MacLeod$\succ$Wakeling$\succ$Cullen
\end{center}

\item \textbf{Highland, 2021 by-election, Ward 12 (Aird and Loch Ness)}

Winner: Fraser (Ind)

If we add 266 ballots of the form below then Shanks (SNP) becomes the winner. (Depending on how ties are broken, we could widen the range of ballots to 265-267.)

\begin{center}
Berkenheger$\succ$Fraser$\succ$MacKintosh$\succ$Moore$\succ$Robertson$\succ$Shanks
\end{center}

\item \textbf{Highland, 2022, Inverness West}

Winners: Boyd (SNP), Graham (LD), MacKintosh (Grn)

If we add 7-205 ballots of the form below then Fraser (Lab) replaces MacKintosh in the winner set.

\begin{center}
Boyd$\succ$MacKintosh$\succ$Graham$\succ$Forbes$\succ$Forsyth

$\succ$McDonald$\succ$Sansum$\succ$Smith$\succ$Fraser
\end{center}

\item \textbf{Inverclyde, 2017, Ward 5 (Inverclyde West)}

%NOTE: We can't make the paradox occur ranking the winners at the top of the ballots.

Winners: Ahlfeld (Ind), McEleny (SNP), Quinn (Ind)

If we add 99-178 ballots of the form below then Holliday (Lab) replaces Quinn in the winner set.

\begin{center}
Kelly$\succ$Quinn$\succ$Ahlfeld$\succ$McEleny$\succ$Taylor$\succ$White$\succ$Holliday
\end{center}

\item \textbf{Moray, 2012, Ward 5 (Heldon and Laich Ward)}

%This is an interesting example in that Stewart keeps getting higher in the rankings as we increase these bad ballots for him. Also, on the upper end of numbers we are almost doubling the electorate size.

Winners: McGillivray (Ind), Ralph (SNP), Wright (Con), Tuke (Ind)

If we add 43-1058 ballots of the form below then Stewart (SNP) replaces Wright in the winner set.

If we add 1059-3296 ballots of the form below then Stewart replaces Tuke in the winner set.

\begin{center}
Ralph$\succ$McGillivray$\succ$Wright$\succ$Tuke$\succ$Gordon$\succ$Mackessack-Leitch$\succ$Stewart
\end{center}

\item \textbf{Moray, 2012, Ward 6 (Elgin City North)}

%This a funny example with a change in order

Winners: Gowans (SNP), Jarvis (Lab),  Shand (SNP)

We we add 253-336 ballots of the form below then Brown (Con) replaces Gowans in the winner set.

\begin{center}
Jarvis$\succ$Shand$\succ$Gowans$\succ$Margach$\succ$Brown
\end{center}

\item \textbf{Moray, 2022, Ward 4 (Fochabers Lhanbryde)}

Winners: MacRae (Con), Morrison (SNP), Williams (Lab)

If we add 22-1064 ballots of the form below then Bremner (SNP) replaces Williams in the winner set.

\begin{center}
Morrison$\succ$Williams$\succ$McRae$\succ$Cameron$\succ$Bremner
\end{center}

\item \textbf{North Ayrshire, 2017, Ward 1 (Irvine West)}

Winners: Clarkson (Lab), Gallacher (Con), MacAulay (SNP), McPhater (SNP)

If we add 17-2721 ballots of the form below then Limonci (SNP) replaces McPhater in the winner set.

\begin{center}
MacAulay$\succ$McPhater$\succ$Clarkson$\succ$Gallacher$\succ$Cochrane$\succ$Craig$\succ$Limonci
\end{center}

\item \textbf{North Ayrshire, 2017, Ward 5 (Ardrossan and Arran)}

Winners: Billings (Con), Gurney (SNP), McMaster (SNP)

If we add 497-994 ballots of the form below then McGuire (Lab) replaces McMaster in the winner set.

\begin{center}
Billings$\succ$Gurney$\succ$McMaster$\succ$Allison$\succ$Hunter$\succ$Turbett$\succ$McGuire
\end{center}

\item \textbf{North Ayrshire, 2017, Ward 9 (Saltcoats)}

%NOTE: downward paradox, can't put all original winners at top of ballot.

Winners: McClung (SNP), McNicol (Ind), Montgomerie (Lab)

If we add 39-191 ballots of the form below then Clydesdale (Con) replaces McNicol in the winner set.

\begin{center}
Bianchini$\succ$McNicol$\succ$McClung$\succ$Montgomerie$\succ$Reid$\succ$Santos$\succ$Clydesdale
\end{center}

\item \textbf{North Ayrshire, 2022, Irvine West}

Winners: Gallacher (Con),  MacAulay (SNP), McPhater (Lab), Robertson (SNP)

If we add 363-1332 ballots of the form below then Mallinson (Lab) replaces McPhater in the winner set.

\begin{center}
Gallacher$\succ$Robertson$\succ$McPhater$\succ$MacAulay$\succ$Blades$\succ$

Cochrane$\succ$Hutton$\succ$Lindsay$\succ$Turbett$\succ$Mallinson
\end{center}

\item \textbf{North Lanarkshire, 2017, Ward 11 (Coatbridge South)}

Winners: Carragher (SNP), Castles (Lab), MacGregor (SNP), Encinias (Lab)

If we add 3-527 votes of the form below then Brooks (IANL) replaces Encinias in the winner set.

\begin{center}
Carragher$\succ$Castles$\succ$MacGregor$\succ$Encinias$\succ$Higgins$\succ$Cameron$\succ$Somers$\succ$Brooks
\end{center}

\item \textbf{North Lanarkshire, 2017, Ward 16 (Mossend and Holytown)}

Winners: Baird (SNP), McNally (SNP), Reddin (Lab)

If we add 22-147 ballots of the form below then Cunningham replaces Reddin in the winner set.

\begin{center}
Baird$\succ$Reddin$\succ$McNally$\succ$Clarkson$\succ$Cunningham
\end{center}

\item \textbf{North Lanarkshire, 2017, Ward 21 (Wishaw)}

Winners: Burgess (Con), Feeney (Lab), Fotheringham (SNP), Hume (SNP)

If we add 393-1521 ballots of the form below then Love (IANL) replaces Hume in the winner set.

\begin{center}
Burgess$\succ$Fotheringham$\succ$Feeney$\succ$Hume$\succ$McKay$\succ$Robertson$\succ$Love
\end{center}

\item \textbf{North Lanarkshire, 2022, Ward 12 (Airdrie South)}

%This is kind of a funny 5-candidate 4-seat example.

Winners: Coyle (SNP), Di Mascio (SNP), McBride (Lab), Watson (Con)

If we add 22-461 ballots of the form below then McNeil (Lab) replaces Watson in the winner set.

\begin{center}
McBride$\succ$Watson$\succ$Coyle$\succ$Di Mascio$\succ$McNeil
\end{center}

\item \textbf{North Lanarkshire, 2022, Ward 16 (Mossend and Holytown)}

%NOTE: We can't make the paradox occur ranking the winners at the top of the ballot.

Winners: Baudo (SNP), McNally (Lab), Reddin (Lab)

If we add 28-192 ballots of the form below then Clarkson (SNP) replaces Reddin in the winner set.

\begin{center}
Cameron$\succ$Reddin$\succ$Baudo$\succ$McNally$\succ$Marshall$\succ$Clarkson
\end{center}

\item \textbf{North Lanarkshire, 2022, Ward 17 (Motherwell West)}

Winners: Crichton (SNP), Kelly (Lab), Nolan (Con)

If we add 5-889 ballots of the form below then Evans (SNP) replaces Nolan in the winner set.

\begin{center}
Crichton$\succ$Nolan$\succ$Kelly$\succ$McCann$\succ$Miller$\succ$Evans
\end{center}

\item \textbf{Perth Kinross, 2012, Ward 9 (Almond and Earn)}

Winners: Anderson (SNP), Jack (Ind), Livingstone (Con)

If we add 36-1668 ballots of the form below then Lumsden (SNP) replaces Jack in the winner set.

\begin{center}
Anderson$\succ$Livingstone$\succ$Jack$\succ$Dundas$\succ$Hayton$\succ$Lumsden
\end{center}

\item \textbf{Perth Kinross, 2017, Ward 10 (Perth City South)}

Winners: Band (SNP), Jamieson (Con), McCole (SNP), Wilson (LD)

If we add 621-783 ballots of the form below then Munro (Lab) replaces McCole in the winner set.

\begin{center}
Jamieson$\succ$Band$\succ$McCole$\succ$Wilson$\succ$Bathgate$\succ$Houston$\succ$Vallot$\succ$Munro
\end{center}

\item \textbf{Perth Kinross, 2022, Ward 4 (Highland)}

Winners: Duff (Con), McDade (Ind), Williamson (SNP)

If we add 56-73 ballots of the form below then Murray (SNP) replaces Williamson in the winner set.

\begin{center}
McDade$\succ$Williamson$\succ$Duff$\succ$Hunter$\succ$McDougall$\succ$McMahon$\succ$Metcalfe$\succ$Murray\\
\end{center}

\item \textbf{Renfrewshire, 2017, Ward 6 (Paisley Southeast)}

%NOTE: We can't make this happen with the winners at the top of the ballot.

Winners: Devine (Lab), Mack (Ind), McGurk (SNP)

If we add 17-93 ballots of the form below then Fulton (Con) replaces Mack in the winner set.

\begin{center}
Swanson$\succ$Mack$\succ$Devine$\succ$McGurk$\succ$McShane

$\succ$Miller$\succ$Smith$\succ$Wilson$\succ$Fulton
\end{center}

\item \textbf{Renfrewshire, 2017, Ward 10 (Houston, Crosslee and Linwood)}

Winners: Doig (SNP), Dowling (Lab), Kerr (Con), Sheridan (Lab)

If we add 41-407 ballots of the form below then Innes (SNP) replaces Dowling in the winner set.

If we add 408-2100 ballots of the form below then Innes replaces Sheridan in the winner set.

\begin{center}
Doig$\succ$Dowling$\succ$kerr$\succ$Sheridan$\succ$Heron$\succ$Speirs$\succ$Innes
\end{center}

\item \textbf{South Ayrshire, 2012, Ward 3 (Ayr North)}

Winners: Campbell (SNP), Cavana (Lab), Hampton (Con), Miller (Lab)

If we add 120-1658 ballots of the form below then Slider (SNP) replaces Hampton in the winner set.

\begin{center}
Campbell$\succ$Miller$\succ$Hampton$\succ$Cavana$\succ$Slider
\end{center}

\item \textbf{South Ayrshire, 2012, Ward 4 (Ayr East)}

Winners: Douglas (SNP), Kilpatrick (Con), McGinley (Lab), Wilson (SNP)

If we add 49-1089 ballots of the form below then McKeand (Con) replaces Wilson in the winner set.

\begin{center}
Kilpatrick$\succ$Wilson$\succ$McGinley$\succ$Douglas$\succ$Bryden$\succ$Bulik$\succ$McKeand
\end{center}

\item \textbf{South Lanarkshire, 2012, Ward 3 (Clydesdale East)}

Winners: Barker (Lab), Gauld (SNP), Stewart (Con)

If we add 33-92 ballots of the form below then McAllan (SNP) replaces Barker in the winner set.

\begin{center}
Gauld$\succ$Barker$\succ$Stewart$\succ$McLatchie$\succ$Moxley$\succ$McAllan
\end{center}

\item \textbf{South Lanarkshire, 2012, Ward 10 (East Kilbride East)}

Winners: Cairney (Lab), Miller (SNP), Wardhaugh (SNP)

If we add 66-233 ballots of the form below then Scott (Lab) replaces Miller in the winner set.

If we add 234-1683 ballots of the form below then Scott replaces Wardhaugh in the winner set.

\begin{center}
Cairney$\succ$Miller$\succ$Wardhaugh$\succ$Benzies$\succ$Harrow$\succ$Scott
\end{center}

\item \textbf{South Lanarkshire, 2017, Ward 10 (East Kilbride East)}

Winners: Miller (SNP), Scott (Lab), Wardhaugh (SNP)

If we add 128-427 ballots of the form below then Peratt (Con) replaces Wardhaugh in the winner set.

Scott$\succ$Wardhaugh$\succ$Miller$\succ$Cairney$\succ$Doolan$\succ$Gall$\succ$Robb$\succ$Perratt \\

\item \textbf{South Lanarkshire, 2022, Ward 12 (Rutherglen Central and North)}

Winners: Calikes (SNP), Cowan (SNP), Lennon (Lab)

If we add 4-59 ballots of the form below then McGinty (Lab) replaces Cowan in the winner set.

\begin{center}
Cowan$\succ$Calikes$\succ$Lennon$\succ$Adebo$\succ$Fox$\succ$McRae$\succ$McGinty
\end{center}

\item \textbf{Stirling, 2017, Ward 3 (Dunblane and Bridge of Allan)}

%NOTE: We can't make the paradox occur with the winners at the top of the ballots.

Winners: Dodds (Con), Houston (SNP), Majury (Con), Tollemache (Grn)

If we add 136-365 ballots of the form below then Robbins (Lab) replaces Tollemache in the winner set.

\begin{center}
Hunter$\succ$Tollemache$\succ$Dodds$\succ$Houston$\succ$Majury$\succ$Auld$\succ$Robbins
\end{center}

\item \textbf{Stirling, 2022, Ward 4 (Stirling North)}

Winners: Gibson (Lab), McGill (SNP), Nunn (Con), Thomson (SNP)

If we add 279-1498 ballots of the form below then Smith (Grn) replaces Thomson in the winner set.

\begin{center}
Nunn$\succ$Gibson$\succ$Thomson$\succ$McGill$\succ$Franklin$\succ$McLelland$\succ$Smith
\end{center}

\item \textbf{West Dunbartonshire, 2017, Ward 2 (Leven)}

Winners: Bollan (WDuns), Dickson (SNP), McAllister (SNP), Millar (Lab)

If we add 17-1333 ballots of the form below then McGinty (Lab) replaces McAllister in the winner set.\\

\begin{center}
Millar$\succ$McAllister$\succ$Dickson$\succ$Bollan$\succ$Parlane$\succ$Quinn$\succ$Drummond$\succ$McGinty
\end{center}

\item \textbf{West Dunbartonshire, 2017, Ward 3 (Dumbarton)}

Winners: Conaghan (SNP), McBride (Lab), McLaren (SNP), Walker (Con)

If we add 373-1007 ballots of the form below then Black (WDuns) replaces McLaren in the winner set.

\begin{center}
Walker$\succ$McLaren$\succ$McBride$\succ$Conaghan$\succ$Muir$\succ$Ruine$\succ$Black
\end{center}

\item \textbf{West Lothian, 2022, Ward 7 (Whitburn and Blackburn)}

Winners: J. Dickson (SNP), M. Dickson (SNP), Paul (Lab), Sullivan (Lab)

If we add 49-520 ballots of the form below then Fairbairn replaces M. Dickson in the winner set.

\begin{center}
Sullivan$\succ$M. Dickson$\succ$J. Dickson$\succ$Paul$\succ$Pattle$\succ$Racionzer$\succ$Fairbairn
\end{center}

\end{itemize}

\end{document}